\documentclass[a4paper,UKenglish]{lipics}
\usepackage{doc}
\usepackage[utf8]{inputenc}
\usepackage{amsmath}
\usepackage{amsfonts}
\usepackage{amsthm}
\usepackage{amssymb}
\usepackage{alltt}
\usepackage{bussproofs}
\usepackage{color}
\usepackage{csquotes}
\usepackage{dsfont}
\usepackage{float}
\usepackage{varwidth}
\usepackage{wasysym}
\usepackage{mathtools}
\usepackage{multirow}
\usepackage{verbatim}
\usepackage{csquotes}
\usepackage{lipsum}
\usepackage{fancyvrb}
\usepackage{hyperref}
\usepackage{rotating}
\usepackage{tikz}
 
\usepackage{microtype}

\bibliographystyle{plain}

\newcommand{\Olist}[2]{\left[ #1 \middle| #2 \right] }
\newcommand{\CRlist}[3]{\left[ #1 \middle| #2 \middle| #3 \right] }
\newcommand{\CRmid}[1]{\widehat{ #1 }}

\title{A Generalized Resolution Proof Schema and the Pigeonhole Principle}
\titlerunning{A Generalized Resolution Proof Schema and the Pigeonhole Principle} 

\author{David M. Cerna}

\affil{Research Institute for Symbolic
Computation (RISC) \\ Johannes Kepler University, Linz, Austria\\
\texttt{David.Cerna@risc.jku.at}}

\authorrunning{D.\,M. Cerna } 

\Copyright{David M. Cerna}

\subjclass{F.4.1 Mathematical Logic, I.2.3 Deduction and Theorem Proving,F.4.2 Grammars and Other Rewriting Systems}
\keywords{Cut Elimination, Resolution, Pigeonhole Principle, Mathematical Induction, Sequent Calculus}

\serieslogo{}
\volumeinfo
  {Billy Editor and Bill Editors}
  {2}
  {Conference title on which this volume is based on}
  {1}
  {1}
  {1}
\EventShortName{}
\DOI{10.4230/LIPIcs.xxx.yyy.p}
\vbadness=7815
\begin{document}
\maketitle

\pagestyle{plain}

\begin{abstract}
The {\em schematic CERES method} is a method of cut elimination for {\em proof schemata}, that is a sequence of proofs with a recursive construction. Proof schemata can be thought of as a way to circumvent the addition of an induction rule to the \textbf{LK}-calculus. In this work, we formalize a schematic version of the {\em Infinitary Pigeonhole Principle} (IPP), in the \textbf{LKS}-calculus \cite{CERESS2}, and analyse the extracted clause set schema. However, the refutation we find cannot be expressed as a resolution proof schema \cite{CERESS2} because there is no clear ordering of the terms indexing the recursion, every ordering is used in the refutation. Interesting enough, the clause set and its refutation is very close to a ``canonical form'' found in cut elimination of \textbf{LK}-proofs~\cite{SimonThesis}. Not being able to handle refutations of this form is problematic in that proof schema, when instantiated, are \textbf{LK}-proofs. Based on the structure of our refutation and structural results~\cite{SimonThesis}, we develop a generalized resolution proof schema based on recursion over a special type of list, and provide a refutation, using our generalization, of the clause set extracted from our formal proof of IPP. We also extract a Herbrand System from the refutation.
\end{abstract}

\section{Introduction}\label{sec:Introduction}

In Gentzen's {\em Hauptsatz}~\cite{Gentzen1935}, a sequent calculus for first order logic was introduced, namely, the \textbf{LK}-calculus. He then went on to show that the {\em cut} inference rule is redundant and in doing so, was able to show consistency of the calculus. The method he developed for eliminating cuts from \textbf{LK}-derivations works by inductively reducing the cuts in a given \textbf{LK}-derivation to cuts which either have a reduced {\em formula complexity} and/or reduced {\em  rank}~\cite{prooftheory}. This method of cut elimination is known as {\em reductive cut elimination}. A useful result of cut elimination for the \textbf{LK}-calculus is that cut-free \textbf{LK}-derivations have the {\em subformula property},  i.e. every formula occurring in the derivation is a subformula of some formula in the end sequent. This property allows for the construction of {\em Herbrand sequents} and other objects which are essential in proof analysis. 

Using cut elimination, it is also possible to gain mathematical knowledge concerning the connection between different proofs of the same theorem. For example, Jean-Yves Girard's application of reductive cut elimination to a variation of F\"{u}rstenberg-Weiss' proof of Van der Waerden's theorem ~\cite{ProocomWaerdens1987} resulted in the {\em analytic} proof of Van der Waerden's theorem as found by Van der Waerden himself. From the work of Girard, it is apparent that interesting results can be derived from eliminating cuts in ``mathematical'' proofs. 

A more recently developed method of cut elimination, the CERES method ~\cite{CERES}, provides the theoretic framework to directly study the cut structure of \textbf{LK}-derivations, and in the process reduces the computational complexity of deriving a cut-free proof. The cut structure is transformed into a clause set allowing for clausal analysis of the resulting clause form. Methods of reducing clause set complexity, such as {\em subsumption} and {\em tautology elimination} can be applied to the characteristic clause set to reduce its complexity. It was shown by Baaz \& Leitsch in ``Methods of cut Elimination''~\cite{Baaz:2013:MC:2509679} that this method of cut elimination has a {\em non-elementary speed up} over reductive cut elimination.

The CERES has been used to analyze connections between proofs well~\cite{Baaz:2008:CAF:1401273.1401552}. The method was applied to F\"{u}rstenberg's proof of the infinitude of primes and the resulting clause set contained Euclid's argument for prime construction. 

A mathematical formalization of  F\"{u}rstenberg's proof requires induction. In the higher-order formalization, induction is easily formalized as part of the formula language. However in first-order, an induction rule needs to be added to the \textbf{LK}-calculus. As it was shown in ~\cite{CERESS2}, reductive cut elimination does not work in the presence of an induction rule in the \textbf{LK}-calculus. Also, other systems~\cite{Mcdowell97cut-eliminationfor} which provide cut elimination in the presence of an induction rule do so at the loss of some essential properties, for example the subformula property.   

In ``Cut-Elimination and Proof Schemata''~\cite{CERESS2}, a version of the \textbf{LK}-calculus was introduced (\textbf{LKS}-calculus) allowing for the formalization of sequences of proofs as a single object level construction, i.e. the {\em proof schema}, as well as a framework for performing cut elimination on proof schemata. Cut elimination performed within the framework of  ~\cite{CERESS2} results in cut-free proof schemata with the subformula property. Essentially, the concepts found in ~\cite{CERES} were generalized to handle recursively defined proofs. It was shown in ~\cite{CERESS2} that {\em schematic} characteristic clause sets are always unsatisfiable, but it is not known whether a given schematic characteristic clause set will have a refutation expressible as a resolution proof schema. This gap distinguishes the schematic version of the CERES method from the previously developed versions. 

The method of~\cite{CERESS2} was used in~\cite{OrderPigeonsArxivVersion,MyThesis} to perform an analysis of a schema representing the ordered infinitary pigeonhole principle. a refutation of the clause set was formalized in the resolution proof schema of~\cite{CERESS2} and a Herbrand system was extracted.  In this work, we consider the infinitary pigeonhole principle which has been referred to in literature as the  {\em tape proof}, found in  ~\cite{TAPEPROOFNOEQ,tapeproofpaper,TapeproofEX2}. We  generalize the tape proof by considering a codomain of size $n$ rather than of size two, of  which we refer to as the {\em Non-injectivity  Assertion} (NiA-schema).

While analysing the NiA-schema using the schematic CERES method we ran into problems because the refutation of the clause set we found in Sec. \ref{sec:refuteset} cannot be formalized as a resolution proof schema. It requires every ordering of the $\omega$-terms indexing the refutation, while the definition of resolution proof schema requires a specific ordering. A solution would be to generalize the definition of resolution proof schema, but a generalization based on this particular example will not hold much weight when concerning general proof schema. However, the results of Sec. 6 of~\cite{SimonThesis} (\textbf{TACNF} normal form) concerning clause sets extracted at various stages of applying Gentzen style cut-elimination to a proof, are closely related to a clause set we derive in Sec.~\ref{sec:refuteset} and our refutation in Sec.~\ref{genref}. As long as one does not remove atomic cuts, the clause sets derived from various stages of Gentzen style cut-elimination create a subsumption hierarchy with a canonical form of clause set at the bottom. We develop our generalization of the resolution proof schema such that it follows the refutation of the canonical form of clause set at the bottom of the hierarchy. Also, our generalization retains the subformula property. We generalize resolution proof schema  by performing recursion over {\em carriage return list} (see Sec. \ref{sec:CRLGSRC}) rather than  over numerals. As an end result, we provide a refutation of the  NiA-schema's clause set in the generalized resolution proof schema and extract its Herbrand System.
 
The paper is structured as follows: In Sec. \ref{sec:SCERES}, we introduce the \textbf{LKS}-calculus and the essential concepts from~\cite{CERESS2}. In Sec~\ref{sec:MathNiA} \& \ref{sec:FormNiA}, we formalize the NiA-schema in the \textbf{LKS}-calculus. In Sec.~\ref{sec:CCSSE}, we extract the characteristic clause set from the NiA-schema and perform {\em normalization} and tautology elimination. In Sec. \ref{sec:refuteset}, we provide a (``mathematically defined'')  refutation proof schema. In Sec.~\ref{sec:CRLGSRC} we introduce the concept of carriage return list and generalized  refutation proof schema. We then provide a formalization of the NIA-schema's refutation in the new refutation proof schema definition and extract a Herbrand system.   In Sec. \ref{sec:Conclusion}, we conclude the paper and discuss future work. 

\section{The \textbf{LKS}-calculus and Clause set Schema}\label{sec:SCERES}

In this section we introduce the necessary background material from~\cite{CERESS2} such as the  \textbf{LKS}-calculus, clause set schema, resolution schema and Herbrand systems. 

\subsection{Schematic language, proofs, and the \textbf{LKS}-calculus}
The \textbf{LKS}-calculus is a schematic version of \textbf{LK}-calculus constructed by Gentzen~\cite{Gentzen1935}. A proof in the \textbf{LKS}-calculus has an indexing {\em parameter}, which, when instantiated,  results is an  \textbf{LK}-derivation~\cite{CERESS2}. We extend the term language to accommodate schematic constructs necessary for \textbf{LKS}-derivations. We work in a two-sorted setting containing a {\em schematic sort} $\omega$ and an {\em individual sort} $\iota$. The schematic sort contains numerals constructed from the constant $0:\omega$, a monadic function $s(\cdot):\omega \rightarrow \omega$ as well as $\omega$-variables $\mathcal{N}_{v}$ (introduced in~\cite{OrderPigeonsArxivVersion}), of which one variable, the {\em free parameter}, will be used to index \textbf{LKS}-derivations. The parameter will be represented by $n$ unless otherwise noted. 

The individual sort is essentially a standard first order term language~\cite{prooftheory}, but we allow schematic function symbols. Thus,  $\iota$ contains countably many constant symbols, countably many {\em constant function symbols}, and  {\em defined function symbols}. The constant function symbols are standard terms and the defined function symbols are used for schematic terms. Though, it is allowed to have defined function symbols unroll into numerals and thus, can be of type $\omega^n \to \omega$.  The $\iota$ sort also has {\em free} and {\em bound} variables and an additional concept, {\em extra variables}~\cite{CERESS2}. These are variables introduced during the unrolling of defined function ({\em predicate}) symbols. Also important are the {\em schematic variable symbols} which are variables of type $\omega \rightarrow \iota$. Essentially second order variables, though, when evaluated with a {\em ground term} from the $\omega$ sort we treat them as first order variables. Our terms are built inductively using constants and variables as a base.  

Formulae are constructed inductively using countably many {\em predicate constants}, logical operators $\vee$,$\wedge$,$\rightarrow$,$\neg$,$\forall$, and $\exists$, as well as  {\em defined predicate symbols} which are used to construct schematic formulae, similar to defined function symbols. In this work {\em iterated $\bigvee$} is the only defined predicate symbol used. Its formal specification is:
\begin{equation}
\label{eq:one}
\varepsilon_{\vee}= \bigvee_{i=0}^{s(y)} P(i) \equiv \left\lbrace \begin{array}{cccc}
{\displaystyle \bigvee_{i=0}^{s(y)} P(i) \Rightarrow \bigvee_{i=0}^{y} P(i) \vee P(s(y)) } & & &
{\displaystyle \bigvee_{i=0}^{0} P(i) \Rightarrow P(0)}
\end{array}\right\rbrace 
\end{equation}
Using the above term and formula language we define the \textbf{LKE}-calculus,  the \textbf{LK}-calculus~\cite{prooftheory} plus an equational theory $\varepsilon$ (in our case $\varepsilon_{\vee}$ Eq. \ref{eq:one}). The equational theory is a primitive recursive term algebra describing the structure of the defined function (predicate) symbols. The \textbf{LKS}-calculus is the  \textbf{LKE}-calculus with {\em proof links}.
\begin{definition}[$\varepsilon$-inference rule]
\begin{prooftree}
\AxiomC{$S\left[ t\right] $}
\RightLabel{$(\varepsilon)$}
\UnaryInfC{$S\left[ t'\right] $}
\end{prooftree}
In the $\varepsilon$ inference rule, the term $t$ in the sequent $S$ is replaced by a term $t'$ such that, given the equational theory  $\varepsilon$,  $\varepsilon \models t = t'$.
\end{definition}

To extend the \textbf{LKE}-calculus with  proof links we need a countably infinite set of {\em proof symbols}  denoted by $\varphi, \psi,\varphi_{i}, \psi_{j} \ldots$. Let $S(\bar{x})$ by a  sequent with a vector of schematic variables $\bar{x}$, by  $S(\bar{t})$ we denote the sequent $S(\bar{x})$ where each of the variables in $\bar{x}$ is replaced by the terms in the vector $\bar{t}$ respectively, assuming that they have the appropriate type. Let $\varphi$ be a proof symbol and $S(\bar{x})$ a sequent, then the expression \AxiomC{$(\varphi(\bar{t}))$}
\dashedLine
\UnaryInfC{$S(\bar{t})$}
\DisplayProof
is called a {\em proof link} . For a variable $n:\omega$, proof links
such that the only $\omega$-variable is $n$ are called {\em $n$-proof links} \index{k-proof Link}.

\begin{definition}[\textbf{LKS}-calculus~\cite{CERESS2}]
The sequent calculus $\mathbf{LKS}$
consists of the rules of $\mathbf{LKE}$, where proof links may appear
at the leaves of a proof.
\end{definition}

\begin{definition}[Proof schemata~\cite{CERESS2}]\label{def.proofschema}
\index{Proof Schemata}
  Let $\psi$ be a proof symbol and $S(n,\bar{x})$ be a sequent
  such that $n:\omega$. Then a {\em proof schema pair for $\psi$} is a pair of $\mathbf{LKS}$-proofs $(\pi,\nu(k))$ with end-sequents $S(0,\bar{x})$ and $S(k+1,\bar{x})$ respectively such that $\pi$ may not contain proof links and $\nu(k)$ may
  contain only proof links of the form \AxiomC{$(\psi(k,\bar{a}))$}
  \dashedLine
  \UnaryInfC{$S(k,\bar{a})$}
  \DisplayProof,
 we say that it is a proof link to $\psi$. We call $S(n,\bar{x})$ the end sequent of $\psi$ and assume an identification between the formula occurrences in the end sequents of $\pi$ and $\nu(k)$ so that we can speak of occurrences in the end sequent of $\psi$. Finally a proof schema $\Psi$ is a tuple of proof schema pairs for $\psi_1 , \cdots \psi_\alpha$ written as $\left\langle \psi_1 , \cdots \psi_\alpha \right\rangle$, such that the $\mathbf{LKS}$-proofs for $\psi_{\beta}$ may also contain $n$-proof links to $\psi_{\gamma}$ for $1\leq \beta < \gamma\leq \alpha$. We also say that the end sequent of $\psi_1$ is the end sequent of $\Psi$. 
\end{definition}

For more information concerning proof schemata and the calculus we  refer the reader to~\cite{CERESS2}. We now move on to the {\em characteristic clause set schema}.

\subsection{Characteristic Clause set Schema}
Extraction of a characteristic clause set from an \textbf{LK} proof (see CERES method~\cite{CERES}) required inductively following the formula occurrences of cut formula ancestors up the proof tree to the leaves. In proof schemata, the concept of ancestors and formula occurrence is more complex. A formula occurrence might be an ancestor of a cut formula in one recursive call and in another it might not. Additional machinery is necessary to extract the characteristic clause term from proof schemata. A set $\Omega$ of formula occurrences from the end-sequent of an \textbf{LKS}-proof $\pi$ is called {\em a configuration for $\pi$}. A configuration $\Omega$ for $\pi$ is called relevant w.r.t. a proof schema $\Psi$ if $\pi$ is a proof in $\Psi$ and there is a $\gamma \in \mathbb{N}$ such that $\pi$ induces a subproof $\pi\downarrow \gamma$ of $\Psi \downarrow \gamma$
such that the occurrences in $\Omega$ correspond to cut-ancestors below $\pi\downarrow \gamma$~\cite{thesis2012Tsvetan,CERESS2}. Note that the set of relevant cut-configurations can be computed given a proof schema $\Psi$. To represent a proof symbol $\varphi$ and configuration $\Omega$ pairing in a clause set we assign them a {\em clause set symbol} $cl^{\varphi,\Omega}(a,\bar{x})$, where $a$ is a term of the $\omega$ sort. 

\begin{definition}[Characteristic clause term~\cite{CERESS2}]\label{def:charterm}
\index{Characteristic Term}
Let $\pi$ be an $\mathbf{LKS}$-proof and $\Omega$ a configuration. In the following, by $\Gamma_{\Omega}$ , $\Delta_{\Omega}$ and $\Gamma_{C}$ , $\Delta_{C}$ we will denote multisets of formulae of $\Omega$- and $cut$-ancestors respectively. Let $r$ be an inference in $\pi$. We define the clause-set term $\Theta_r^{\pi,\Omega}$ inductively:
\begin{itemize}
\item if $r$ is an axiom of the form $\Gamma_{\Omega} ,\Gamma_C , \Gamma \vdash \Delta_{\Omega} ,\Delta_C , \Delta$, then \\ $\Theta_{r}^{\pi,\Omega} = \left\lbrace \Gamma_{\Omega} ,\Gamma_C  \vdash \Delta_{\Omega} ,\Delta_C \right\rbrace $
\item if $r$ is a proof link of the form
\AxiomC{$\psi(a,\bar{u})$}
\dashedLine
\UnaryInfC{$\Gamma_{\Omega} ,\Gamma_C , \Gamma \vdash \Delta_{\Omega} ,\Delta_C , \Delta$}
\DisplayProof
then define $\Omega'$ as the set of formula occurrences from $\Gamma_{\Omega} ,\Gamma_C  \vdash \Delta_{\Omega} ,\Delta_C$ and $\Theta_{r}^{\pi,\Omega} = cl^{\psi,\Omega}(a,\bar{u})$
\item if $r$ is a unary rule with immediate predecessor \index{Predecessor} $r'$ , then $\Theta_{r}^{\pi,\Omega} =  \Theta_{r'}^{\pi,\Omega}$

\item if $r$ is a binary rule with immediate predecessors $r_1 $, $r_2 $, then 
\begin{itemize}
\item if the auxiliary formulae of $r$ are $\Omega$- or $cut$-ancestors, then
$\Theta_{r}^{\pi,\Omega} = \Theta_{r_1}^{\pi,\Omega} \oplus \Theta_{r_2}^{\pi,\Omega}$
\item otherwise, $\Theta_{r}^{\pi,\Omega} = \Theta_{r_1}^{\pi,\Omega} \otimes \Theta_{r_2}^{\pi,\Omega}$
\end{itemize}
\end{itemize}
Finally, define $\Theta^{\pi,\Omega} = \Theta_{r_0}^{\pi,\Omega}$ where $r_0$ is the last inference in $\pi$ and $\Theta^{\pi} = \Theta^{\pi,\emptyset}$. We call $\Theta^{\pi}$ the characteristic term of $\pi$. 
\end{definition}

Clause terms evaluate to sets of clauses by $|\Theta| = \Theta$ for clause sets $\Theta$, $|\Theta_1 \oplus \Theta_2| = |\Theta_1| \cup |\Theta_2|$, $|\Theta_1 \otimes \Theta_2| = \{C \circ D \mid C \in |\Theta_1|, D \in |\Theta_2|\}$.

The characteristic clause term is extracted for each proof symbol in a given proof schema $\Psi$, and together they make the {\em characteristic term schema} for $\Psi$.
\begin{definition}[Characteristic Term Schema\cite{CERESS2}]
Let $\Psi = \left\langle \psi_{1},\cdots, \psi_{\alpha} \right\rangle $ be a proof schema. We define the rewrite rules for clause-set symbols for all proof symbols $\psi_{\beta}$  and configurations $\Omega$ as $cl^{\psi_{\beta},\Omega}(0,\overline{u}) \rightarrow \Theta^{\pi_{\beta},\Omega}$ and $cl^{\psi_{\beta},\Omega}(k+1,\overline{u}) \rightarrow \Theta^{\nu_{\beta},\Omega}$ where $1\leq \beta\leq \alpha$. Next, let $\gamma\in \mathbb{N}$ and $cl^{\psi_{\beta},\Omega}\downarrow_{\gamma}$ be the normal form of $cl^{\psi_{\beta},\Omega}(\gamma,\overline{u})$ under the rewrite system just given extended by rewrite rules for defined function and predicate symbols. Then define $\Theta^{\psi_{\beta},\Omega} = cl^{\psi_{\beta},\Omega}$ and $\Theta^{\Psi,\Omega} = cl^{\psi_{1},\Omega}$ and finally the characteristic term schema $\Theta^{\Psi} =  \Theta^{\Psi,\emptyset}$.
\end{definition}
\subsection{Resolution Proof Schemata} \label{ResProSch}
From the characteristic clause set we can construct {\em clause schemata} which are an essential part of the definition of {\em resolution terms} and {\em resolution proof schema }\cite{CERESS2}. Clause schemata are a generalization of clauses which serve as the base for the resolution terms used to construct a resolution proof schema. Though, for the rest of this work, we leave clause schemata as a theoretical construct and work directly with meta-level clauses based on clause schemata. One additional notion needed for defining resolution proof schema is that of {\em clause variables}. The idea behind clause variables is that parts of the clauses at the leaves can be passed down a refutation to be used later on. The definition of resolution proof schemata uses clause variables as a way to handle this passage of clauses. Substitutions on clause variables are defined in the usual way. 

\begin{definition}[Clause Schema \cite{CERESS2}]
Let $b$  an  $\omega$-term, $\overline{u}$ a vector
of schematic variables and $\overline{X}$  a vector of clause variables. Then $c(b, \overline{u}, \overline{X})$ is
a clause schema w.r.t. the rewrite system $R$:
\begin{center}
\begin{tabular}{cc}
$c(0, \overline{u}, \overline{X}) \rightarrow C \circ X$ & $c(k + 1, \overline{u}, \overline{X}) \rightarrow c(k, \overline{u}, \overline{X}) \circ D$
\end{tabular}
\end{center}
where $C$ is a clause with $V(C) \subseteq \left\lbrace \overline{u} \right\rbrace$  and $D$ is a clause with $V(D) \subseteq \left\lbrace k, \overline{u}\right\rbrace $. Clauses and clause variables are clause schemata w.r.t. the empty rewrite system. Later when we introduce carriage return list, note that both the size and position in the list are $\omega$-terms and thus can be used in clause set schema.
\end{definition}
\begin{definition}[Resolution Term \cite{CERESS2}]
Clause schemata are resolution terms; if
$\rho_1$ and $\rho_2$  are resolution terms, then $r(\rho_1 ; \rho_2 ; P)$ is a resolution term, where $P$ is an atom formula schema.
\end{definition}

Essentially a resolution term  $r(\rho_1 ; \rho_2 ; P)$ is interpreted as resolving  $\rho_1, \rho_2$ on the atom $P$. The notion of most general unifier has not yet been introduced being that we introduce the concept as a separate schema from the resolution proof schema. 

\begin{definition}[Resolution Proof Schema \cite{OrderPigeonsArxivVersion,CERESS2}]
A resolution proof schema $\mathcal{R}(n)$ is a structure $( \varrho_1 , \cdots , \varrho_\alpha )$ together with a set of rewrite rules $\mathcal{R} = \mathcal{R}_1 \cup \cdots \cup \mathcal{R}_{\alpha}$ ,
where the $\mathcal{R}_i\ (for \ 1 \leq i \leq \alpha )$ are pairs of rewrite rules 
\begin{center}
\begin{tabular}{cc}
$\varrho_i (0, \overline{w},\overline{ u},  \overline{X} ) \rightarrow \eta_i$ & $\varrho_i (k+1,\overline{w},\overline{ u},  \overline{X} ) \rightarrow \eta'_i $
\end{tabular}
\end{center}

where, $\overline{w},\overline{ u},$ and $ \overline{X}$ are vectors of $\omega$, schematic, and clause variables respectively, $\eta_i$ is a resolution term over terms of the form $\varrho_j(a_j , \overline{m},\overline{ t},  \overline{C})$ for $i<j\leq \alpha$, and $\eta'_i$ is a resolution term over terms of the form $\varrho_j(a_j , \overline{m},\overline{ t},  \overline{C})$ and $\varrho_i(k, \overline{m},\overline{ t},  \overline{C})$ for $i < j \leq \alpha$; by $a_j$, we denote a term of the  $\omega$ sort.
\end{definition}

Resolution proof schema  simulates a recursive construction of a resolution derivation tree and can be unfolded into a tree once the free parameter is instantiated. The expected properties of resolution and resolution derivations hold for resolution proof schema, more detail can be found in \cite{CERESS2}. Notice that an ordering is forced on the indexing value $k$. This is where we run into problems later. 

\begin{definition}[Substitution Schema \cite{CERESS2}]
Let $u_1 , \cdots , u_{\alpha}$ be schematic variable
symbols of type $\omega \rightarrow \iota$ and $t_1 , \cdots , t_{\alpha}$ be term schemata containing no other $\omega$-variables than $k$. Then a substitution schema is an expression of the form $\left[ u_1 /\lambda k.t_{1} , \cdots , u_{\alpha} /\lambda k.t_{\alpha} \right]$.
\end{definition}

Semantically, the meaning of the substitution schema is for all $\gamma\in \mathbb{N}$ we have a substitution of the form $\left[ u_1(\gamma) /\lambda k.t_{1}\downarrow_{\gamma} , \cdots , u_{\alpha}(\gamma) /\lambda k.t_{\alpha}\downarrow_{\gamma} \right]$. For the resolution proof schema the semantic meaning is as follows, let $R(n) = ( \varrho_{1}, \cdots , \varrho_{\alpha} )$ be a resolution proof schema, $\theta$ be a clause substitution, $\nu$ an $\omega$-variable substitution, $\vartheta$ be a substitution schema, and $\gamma \in \mathbb{N}$, then  $R(\gamma)\downarrow$ denotes a resolution
term which has a normal form of $\varrho_1 (n,\overline{w}, \overline{u} , \overline{X} )\theta\nu\vartheta[n/\gamma ]$ w.r.t. $R$ extended by rewrite rules for defined function and predicate symbols.

\subsection{Herbrand Systems}\label{sec:hersys}

From the resolution proof schema and the substitution schema we can extract a so-called {\em Herbrand system}. The idea is to generalize the mid sequent theorem of Gentzen to proof schemata \cite{Baaz:2013:MC:2509679,prooftheory}. This theorem states that a proof (cut-free or with quantifier-free cuts) of a prenex end-sequent can be transformed in a way that there is a midsequent separating quantifier inferences from propositional ones. The mid-sequent is propositionally valid (w.r.t. the axioms) and contains (in general several) instances of the matrices of the prenex formulae; it is also called a {\em Herbrand sequent}. The schematic CERES method was designed such that a Herbrand system can be extracted. Our generalization preserves this property, however, the recursion for list construction must be over carriage return list (see Sec.~\ref{sec:CRLGSRC}), i.e. replace $\gamma$ by $\mathit{CR}_\gamma$ in Def.~\ref{def:herbrand}. We restrict the sequents further to skolemized ones. In the schematization of these sequents we allow only the matrices of the formulae to contain schematic variables (the number of formulae in the sequents and the quantifier prefixes are fixed).  

\begin{definition}[skolemized prenex sequent schema\cite{OrderPigeonsArxivVersion}]\label{def:sps-schema}
Let 
$S(n) = \Delta_n, \varphi_{1}(n), \cdots, \varphi_{k}(n) \vdash $ \\ $ \psi_{1}(n), \cdots,  \psi_{l}(n), \Pi_n$ ,for $ k,l\in \mathbb{N}$, where 

\begin{tabular}{ll}
$\varphi_{i}(n) = \forall x_{1}^{i}\cdots \forall x_{\alpha_{i}}^{i} F_{i}(n,x_{1}^{i},\cdots, x_{\alpha_{i}}^{i}),$ \ \ &$\psi_{j}(n) = \exists x_{1}^{j}\cdots \exists y_{\beta_{j}}^{j} E_{j}(n,y_{1}^{j},\cdots, y_{\beta_{j}}^{j}),$
\end{tabular}

for $\alpha_{i},\beta_{j}\in \mathbb{N}$, $F_{i}$ and $E_{j}$ are quantifier-free schematic formulae and $\Delta_n,\Pi_n$ are multisets of quantifier-free formulae of fixed size; moreover, the only free variable in any of the formulae is $n:\omega$. Then $S(n)$ is called a skolemized prenex sequent schema (sps-schema).
\end{definition}

\begin{definition}[Herbrand System\cite{OrderPigeonsArxivVersion}]
\label{def:herbrand}
Let $S(n)$ be a sps-schema as in Definition~\ref{def:sps-schema}. Then a Herbrand system for $S(n)$ is a rewrite system ${\cal R}$ (containing the list constructors and unary function symbols $w_{i}^{x}$, for x $\in \left\lbrace \Phi, \Psi \right\rbrace$), such that  for each $\gamma \in \mathbb{N}$, the normal form of $w_{i}^{x}(\gamma)$ w.r.t ${\cal R}$ is a list of list of terms $t_{i,x,\gamma}$ (of length $m(i,x)$) such that the sequent 
$$\Delta_\gamma,\Phi_1(\gamma),\ldots,\Phi_k(\gamma) \vdash \Psi_1(\gamma),\ldots,\Psi_l(\gamma)$$
for 
\begin{eqnarray*}
\Phi_j(\gamma) &=& \bigwedge^{m(j,\varphi)}_{p=1}E_j(\gamma,t_{j,\varphi,\gamma}(p,1),\ldots,t_{j,\varphi,\gamma}(p,\alpha_j))\ (j=1,\ldots,k),\\
\Psi_j(\gamma) &=& \bigvee^{m(j,\psi)}_{p=1}F_j(\gamma,t_{j,\psi,\gamma}(p,1),\ldots,t_{j,\psi,\gamma}(p,\beta_j))\ (j=1,\ldots,l),
\end{eqnarray*}
is \textbf{LKE}-provable.
\end{definition}

\section{A ``Mathematical'' Proof of the NiA Statement}\label{sec:MathNiA}

In this section we provide a mathematical proof of the NiA statement (Thm. \ref{thm:finalpart}). The proof is very close in structure to the formal proof written in the \textbf{LKS}-calculus, which can be found in Sec. \ref{sec:FormNiA}. We skip the basic structure of the proof and outline the structure emphasising the cuts. We will refer to the interval $\left\lbrace 0, \cdots, n-1 \right\rbrace $ as $\mathbb{N}_{n}$. Let $rr_{f}(n)$ be the following sentence, for $n\geq 2$: there exists $p,q \in \mathbb{N}$ such that  $p < q$ and  $f(p) = f(q)$,  or for all $x \in \mathbb{N}$ there exists a $y \in \mathbb{N}$ such that $x\leq y$ and $f(y)\in \mathbb{N}_{n-1}$.

\begin{lemma}
\label{lem:Inducbase}
Let $f:\mathbb{N} \rightarrow \mathbb{N}_{n}$, where $n\in \mathbb{N}$, be total, then  $rr_{f}(n)$ or there exists $p,q \in \mathbb{N}$ such that  $p < q$ and  $f(p) = f(q)$.
\end{lemma}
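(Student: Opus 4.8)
The plan is to reduce Lemma~\ref{lem:Inducbase} to the definition of $rr_f(n)$ by a direct case analysis, so that almost nothing needs to be proved beyond unwinding notation. Recall that $rr_f(n)$ asserts the disjunction: (a) there exist $p,q\in\mathbb{N}$ with $p<q$ and $f(p)=f(q)$, or (b) for all $x\in\mathbb{N}$ there exists $y\in\mathbb{N}$ with $x\leq y$ and $f(y)\in\mathbb{N}_{n-1}$. The statement of the lemma is ``$rr_f(n)$ or (a)''. Since (a) is already one of the two disjuncts of $rr_f(n)$, the whole claim is essentially a weakening: if (a) holds we are immediately done, and if (a) fails we must produce $rr_f(n)$, which — given $\neg$(a) — reduces to establishing disjunct (b).

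So the real content is: \emph{assuming} $f:\mathbb{N}\to\mathbb{N}_n$ is total and there is no repeated value (no $p<q$ with $f(p)=f(q)$, i.e. $f$ is injective), show that for every $x$ there is $y\geq x$ with $f(y)\in\mathbb{N}_{n-1}$, i.e. $f(y)\neq n-1$. First I would fix an arbitrary $x\in\mathbb{N}$ and argue by contradiction: suppose $f(y)=n-1$ for all $y\geq x$. Then $f$ restricted to $\{y : y\geq x\}$ is constantly $n-1$, so picking any two distinct indices $p<q$ with $p,q\geq x$ gives $f(p)=f(q)=n-1$, contradicting injectivity of $f$. Hence some $y\geq x$ has $f(y)\neq n-1$, and since the codomain is $\mathbb{N}_n=\{0,\dots,n-1\}$, this means $f(y)\in\{0,\dots,n-2\}=\mathbb{N}_{n-1}$. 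This holds for every $x$, which is exactly disjunct (b), hence $rr_f(n)$ holds.

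There is no genuine obstacle here; the lemma is the base case of the induction driving the NiA proof and is deliberately lightweight. The only points requiring a modicum of care are bookkeeping ones: confirming that $\mathbb{N}_n\setminus\{n-1\}$ is literally $\mathbb{N}_{n-1}$ under the paper's indexing convention, and noting the hypothesis $n\geq 2$ is what makes $\mathbb{N}_{n-1}$ nonempty and the statement nonvacuous (for $n=1$, $\mathbb{N}_{n-1}=\mathbb{N}_0=\emptyset$ and no injective $f$ into $\mathbb{N}_1$ exists on an infinite domain anyway). I would also remark that the ``contradiction'' step is really just the finite pigeonhole observation in miniature: any function constant on an infinite set trivially collides, so no appeal to a stronger principle is needed. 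The proof is then a two-line case split followed by this one-paragraph argument.
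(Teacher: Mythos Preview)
Your proposal is correct and is essentially a fleshed-out version of the paper's one-line proof, which merely says to split the codomain into $\mathbb{N}_{n-1}$ and the top element (with a degenerate case). Your explicit case split on whether (a) holds, followed by the constant-tail reductio to obtain (b), is exactly what that sketch leaves implicit; you are also more careful with the indexing (the top element is $n-1$, not $n$ as the paper writes).
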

\begin{proof}
We can split the codomain into $\mathbb{N}_{n-1}$ and $\left\lbrace n \right\rbrace$, or the codomain is $\left\lbrace 0 \right\rbrace$.
\end{proof}
\begin{lemma}
\label{lem:inDucstep}
Let $f$ be a function as defined in Lem. \ref{lem:Inducbase} and $2< m\leq n$, then if $rr_{f}(m)$ holds so does $rr_{f}(m-1)$. 
\end{lemma}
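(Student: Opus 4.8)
The plan is to unfold the definition of $rr_f(m)$ and reduce to the nontrivial case. Recall that $rr_f(m)$ says: either there exist $p,q\in\mathbb{N}$ with $p<q$ and $f(p)=f(q)$, or for all $x\in\mathbb{N}$ there exists $y\in\mathbb{N}$ with $x\leq y$ and $f(y)\in\mathbb{N}_{m-1}$. The first disjunct is independent of $m$, so if it holds we immediately get the first disjunct of $rr_f(m-1)$ and are done. Hence I would assume the first disjunct fails and work under the hypothesis that for all $x$ there is $y\geq x$ with $f(y)\in\mathbb{N}_{m-1}=\{0,\dots,m-2\}$; the goal becomes to show that for all $x$ there is $y\geq x$ with $f(y)\in\mathbb{N}_{m-2}=\{0,\dots,m-3\}$ (using $m-1>2$, so $\mathbb{N}_{m-2}$ is nonempty and this is meaningful), unless the first disjunct of $rr_f(m-1)$ holds after all.

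The key step is a pigeonhole-style argument on the value $m-2$. Consider the set $A=\{y\in\mathbb{N} : f(y)=m-2\}$. If $A$ is finite (in particular if it is empty), then beyond $\max A$ every $y$ with $f(y)\in\mathbb{N}_{m-1}$ actually has $f(y)\in\mathbb{N}_{m-2}$; combined with the hypothesis (applied to $x'=\max(x,\max A + 1)$) this yields, for every $x$, a $y\geq x$ with $f(y)\in\mathbb{N}_{m-2}$, which is exactly the second disjunct of $rr_f(m-1)$. If instead $A$ is infinite, pick any two distinct elements $p<q$ of $A$; then $f(p)=m-2=f(q)$, giving the first disjunct of $rr_f(m-1)$. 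In either case $rr_f(m-1)$ holds.

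The main obstacle — really the only subtlety — is the case split on finiteness of $A$ and making sure the bookkeeping with the lower bound $x$ is handled uniformly: one must feed the hypothesis $rr_f(m)$ the shifted point $\max(x,\max A+1)$ rather than $x$ itself, so that the returned witness $y$ is simultaneously $\geq x$ and past all occurrences of the value $m-2$. I expect the formal \textbf{LKS}-proof to realize the ``infinitely many $y$ with $f(y)=m-2$ vs.\ finitely many'' dichotomy via a cut on a formula expressing ``$\exists x\,\forall y\,(x\leq y \rightarrow f(y)\neq m-2)$'' or its negation, which is precisely the cut structure that later feeds the characteristic clause set; the mathematical content, however, is just the elementary argument above.
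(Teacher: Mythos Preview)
Your proof is correct and matches the paper's approach: the paper's one-line proof simply says to reapply the codomain-splitting step of Lemma~\ref{lem:Inducbase} to the right disjunct of $rr_f(m)$, and your finite/infinite case split on $A=\{y:f(y)=m-2\}$ is exactly that step written out in detail. Your guess about the cut formula is also on target---the formal proof cuts on $I_s$, i.e.\ on $\forall x\,\exists y\,(x\le y\wedge f(y)=m-2)$, which is the negation of the formula you proposed.
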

\begin{proof}
Apply the steps of Lem. \ref{lem:Inducbase} to the right side of the {\em or} in $rr_{f}(m)$.
\end{proof}
\begin{theorem}
\label {thm:finalpart}
Let $f$ be a function as defined in Lem. \ref{lem:Inducbase} , then there exists $i,j \in \mathbb{N}$ such that $i<j$ and $f(i) = f(j)$.
\end{theorem}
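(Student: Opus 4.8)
The plan is to derive the theorem by combining Lemma~\ref{lem:Inducbase}, Lemma~\ref{lem:inDucstep}, and a base-case argument about a function into $\mathbb{N}_2$. First I would observe that by Lemma~\ref{lem:Inducbase}, either we are already done (there exist $p<q$ with $f(p)=f(q)$, which is exactly the conclusion), or $rr_f(n)$ holds. So assume $rr_f(n)$. Now I apply Lemma~\ref{lem:inDucstep} repeatedly: from $rr_f(n)$ we get $rr_f(n-1)$, then $rr_f(n-2)$, and so on down to $rr_f(2)$. Formally this is a finite downward induction on $m$ from $n$ to $2$, where at each step either the left disjunct of $rr_f(m)$ already gives the desired $p<q$ with $f(p)=f(q)$ (and we stop), or the right disjunct lets us invoke Lemma~\ref{lem:inDucstep} to obtain $rr_f(m-1)$.

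The remaining case is that $rr_f(2)$ holds and none of the left disjuncts fired along the way, i.e.\ we have: for all $x\in\mathbb{N}$ there exists $y\geq x$ with $f(y)\in\mathbb{N}_1=\{0\}$. But this means $f(y)=0$ for infinitely many $y$; in particular there exist two distinct such witnesses, say $y_0$ obtained from $x=0$ and $y_1>y_0$ obtained from $x=y_0+1$, giving $f(y_0)=f(y_1)=0$ with $y_0<y_1$. Setting $i=y_0$, $j=y_1$ completes the proof. (Equivalently, one can phrase the base case as: $rr_f(2)$ says $f$ restricted to its ``tail'' lands in a singleton, so any two tail-points collide.)

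The main obstacle — really the only subtlety — is the bookkeeping of the disjunctions: $rr_f(m)$ is itself a disjunction, so the downward iteration is not a clean implication chain but a branching argument where one branch terminates with the goal and the other continues the recursion. I would handle this by stating the iteration as: for each $m$ with $2\le m\le n$, either the theorem's conclusion holds, or $rr_f(m)$ holds; this is proved by induction on $n-m$ using Lemma~\ref{lem:Inducbase} for the start and Lemma~\ref{lem:inDucstep} for the step. Then instantiating at $m=2$ and unfolding $rr_f(2)$ as above closes the argument. Everything else is routine; no genuinely hard step is involved, which matches the fact that this ``mathematical'' proof is meant only as a readable shadow of the \textbf{LKS}-formalization in Section~\ref{sec:FormNiA}.
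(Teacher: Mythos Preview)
Your proposal is correct and follows essentially the same approach as the paper: start from Lemma~\ref{lem:Inducbase}, chain Lemma~\ref{lem:inDucstep} downward to obtain $rr_f(2)$, and finish with the base case. The paper's proof is terser (it dismisses the final step as ``trivial by Lem.~\ref{lem:Inducbase}''), whereas you spell out the disjunction bookkeeping and the explicit witness construction from $rr_f(2)$, but there is no substantive difference in strategy.
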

\begin{proof}
Chain together the implications of  Lem. \ref{lem:inDucstep} and derive $rr_{f}(2)$, the rest is trivial by   Lem. \ref{lem:Inducbase}.
\end{proof}
This proof makes clear that the number of cuts needed to prove the statement is parametrized by the size of the codomain of the function $f$. The formal proof of the next section outlines more of the basic assumptions being that they are needed for constructing the characteristic clause set. 

\section{NiA formalized in the \textbf{LKS}-calculus}\label{sec:FormNiA}

In this section we provide a formalization of the NiA-schema whose proof schema representation is $\left\langle (\omega(0),\omega(n+1)),(\psi(0),\psi(n+1)) \right\rangle$. Cut-ancestors will be marked with a $^*$ and $\Omega$-ancestors with $^{**}$. We will make the following abbreviations: $ EQ_{f} \equiv \exists p \exists q( p < q \wedge   f(p)= f(q))$, $I(n) \equiv \forall x \exists y ( x\leq y \wedge \bigvee_{i=0}^{n} f(y) = i)$, $I_s(n) \equiv \forall x \exists y ( x\leq y \wedge f(y) = n)$ and $AX_{eq}(n) \equiv f(\beta) = n^{*} ,f(\alpha) = n^{*} \vdash  f(\beta)= f(\alpha)$ (the parts of $AX_{eq}(n)$ marked as cut ancestors are always cut ancestors in the NiA-schema).
\begin{figure}[H]
\begin{tiny}
\begin{prooftree}
\AxiomC{$\begin{array}{c} \vdash \alpha \leq \alpha^{*} \end{array}$}
\AxiomC{$\begin{array}{c}  f(\alpha) = 0  \vdash   f(\alpha) = 0^{*}\end{array}$}
\RightLabel{$\wedge :r$}
\BinaryInfC{$\begin{array}{c}\vdots\\ \forall x  f(x) = 0  \vdash I(0)^{*} \end{array}$}
\AxiomC{$\begin{array}{c}  s(\beta) \leq \alpha^{*} \vdash \beta < \alpha\end{array}$}

\AxiomC{$\begin{array}{c}   AX_{eq}(0)\end{array}$}
\RightLabel{$\wedge:r$}

\BinaryInfC{$\begin{array}{c} \vdots\\ I(0)^{*}  \vdash EQ_{f}\end{array}$}
\RightLabel{$cut$}
\BinaryInfC{$\begin{array}{c}  \forall x  f(x) = 0 \vdash  EQ_{f}\end{array}$}
\end{prooftree}
\end{tiny}
\caption{Proof symbol $\omega(0)$}
\end{figure}

\begin{figure}[H]
\begin{tiny}
\begin{prooftree}
\AxiomC{$\begin{array}{c}\varphi(n+1)  \end{array}$}
\dottedLine
\UnaryInfC{$\begin{array}{c}  I(n+1)^{*} \vdash    EQ_{f} \end{array}$}
\AxiomC{$\begin{array}{c} \vdash  \alpha \leq \alpha^{*} \end{array}$}

\AxiomC{$\begin{array}{c}  \bigvee_{i=0}^{n+1} f(\alpha) = i  \vdash  \bigvee_{i=0}^{n+1} f(\alpha) = i^{*}\end{array}$}
\RightLabel{$\wedge :r$}
\BinaryInfC{$\begin{array}{c}\vdots \\ \forall x \bigvee_{i=0}^{n+1} f(x) = i \vdash   I(n+1)^{*} \end{array}$}
\RightLabel{$cut$}
\BinaryInfC{$\begin{array}{c}  \forall x \bigvee_{i=0}^{n+1} f(x) = i \vdash    EQ_{f} \end{array}$}
\end{prooftree}

\end{tiny}
\caption{Proof symbol $\omega(n+1)$}
\end{figure}

\begin{figure}[H]
\begin{tiny}
\begin{prooftree}
\AxiomC{$\begin{array}{c}  s(\beta) \leq \alpha^{*} \vdash  \beta < \alpha\end{array}$}
\AxiomC{$\begin{array}{c}   AX_{eq}(0)\end{array}$}
\RightLabel{$\wedge:r$}

\BinaryInfC{$\begin{array}{c} \vdots\\ I_s(0)^{*} \vdash    EQ_{f}\end{array}$}
\end{prooftree}
\end{tiny}
\caption{Proof symbol $\psi(0)$}
\end{figure}

\begin{figure}[H]
\begin{tiny}
\begin{prooftree}
\AxiomC{$\begin{array}{c}   max(\alpha,\beta)\leq \gamma^{**}  \vdash \\  \alpha \leq \gamma^{*}  \end{array}$} 
\AxiomC{$\begin{array}{c}    f(\gamma) = 0^{**}     \vdash  f(\gamma) = 0^{*} \end{array}$}
\UnaryInfC{$\begin{array}{c}  \vdots \end{array}$}
\AxiomC{$\begin{array}{c}    f(\gamma) = (n+1)^{**}     \vdash \\ f(\gamma) = n+1^{*} \end{array}$}
\BinaryInfC{$\begin{array}{c}\vdots    \end{array}$}
\RightLabel{$\wedge :r$}
\BinaryInfC{$\begin{array}{c}\vdots    \end{array}$}
\AxiomC{$\begin{array}{c}    max(\alpha,\beta)\leq \gamma^{**}  \vdash \\  \beta \leq \gamma^{*}  \end{array}$}
\RightLabel{$\wedge :r$}
\BinaryInfC{$\begin{array}{c} I((n+1))^{**} \vdash     I(n)^{*},    I_{s}(n+1)^{*} \\ \vdots \end{array}$}
\end{prooftree}
\end{tiny}

\begin{tiny}
\begin{prooftree}
\AxiomC{$\begin{array}{c}\vdots \\ I(n+1)^{**} \vdash     I(n)^{*},    I_{s}(n+1)^{*}\end{array}$}
\AxiomC{$\varphi(n)$}
\dottedLine
\UnaryInfC{$\begin{array}{c}  I(n)^{*}\vdash     EQ_{f}\end{array}$}
\RightLabel{$cut$}
\BinaryInfC{$\begin{array}{c}   I(n+1)^{**} \vdash   EQ_{f}, I_{s}(n+1)^{*}\\\vdots\end{array}$}
\end{prooftree}
\end{tiny}

\begin{tiny}
\begin{prooftree}
\AxiomC{$\begin{array}{c}\vdots \\  I(n+1)^{**} \vdash   EQ_{f}, I_{s}(n+1)^{*}\end{array}$}
\AxiomC{$\begin{array}{c}  s(\beta) \leq \alpha^{*} \vdash  \beta < \alpha\end{array}$}
\AxiomC{$\begin{array}{c}   AX_{eq}(n+1)\end{array}$}
\RightLabel{$\wedge:r$}

\BinaryInfC{$\begin{array}{c} \vdots\\I_{s}(n+1)^{*}  \vdash EQ_{f} \end{array}$}
\RightLabel{$cut$}
\BinaryInfC{$\begin{array}{c}   I(n+1)^{**} \vdash    EQ_{f} , EQ_{f} \end{array}$}
\RightLabel{$c:r$}
\UnaryInfC{$\begin{array}{c} I(n+1)^{**} \vdash    EQ_{f}  \end{array}$}
\end{prooftree}
\end{tiny}
\caption{Proof symbol $\psi(n+1)$}
\end{figure}

\section{Characteristic Clause set Schema Extraction }\label{sec:CCSSE}
The outline of the formal proof provided above highlights the inference rules which directly influence the characteristic clause set schema of the NiA-schema. Also to note are the configurations of the NiA-schema which are relevant, namely, the empty configuration $\emptyset$ and a schema of configurations $\Omega(n) \equiv \forall x \exists y ( x\leq y \wedge \bigvee_{i=0}^{n} f(y) = i)$. Thus, we have the following:
\begin{equation}
\label{seq:charclaset}
\begin{array}{l} CL_{NiA}(0)\equiv \Theta^{\omega,\emptyset}(0)\equiv\left( cl^{\psi,\Omega(0)}(0)\oplus  \left\lbrace \vdash \alpha\leq \alpha \right\rbrace  \right)\oplus \left\lbrace \vdash f(\alpha)=0  \right\rbrace 
\\
 cl^{\psi,\Omega(0)}(0) \equiv\Theta^{\psi,\Omega(0)}(0) \equiv\left\lbrace  s(\beta)\leq   \alpha \vdash\right\rbrace  \otimes \left\lbrace f(\alpha)=0,  f(\beta)=0\vdash \right\rbrace 
 \\
 CL_{NiA}(n+1)\equiv \Theta^{\omega,\emptyset}(n+1)\equiv  \left( cl^{\psi,\Omega(n+1)}(n+1)\oplus \left\lbrace\vdash \alpha\leq \alpha\right\rbrace  \right)\oplus \left\lbrace \vdash \bigvee_{i=0}^{n+1} f(\alpha)=i  \right\rbrace 
\\
\begin{array}{l}  cl^{\psi,\Omega(n+1)}(n+1)\equiv \Theta^{\psi,\Omega(n+1)}(n+1)\equiv \left(  \left( cl^{\psi,\Omega(n)}(n) \oplus \left( \left\lbrace s(\beta)\leq \alpha \vdash\right\rbrace  \otimes \right. \right. \right. \\ \left. \left. \left.   \left\lbrace f(\alpha)= (n+1), f(\beta)=(n+1)\vdash\right\rbrace  \right) \right) \oplus \left\lbrace  max(\alpha,\beta)\leq \gamma \vdash \alpha \leq \gamma \right\rbrace \right)   \oplus \\ \left\lbrace max(\alpha,\beta)\leq \gamma \vdash    \beta \leq \gamma \right\rbrace 
\end{array}
\end{array}
\end{equation}

In the characteristic clause set schema  $CL_{NiA}(n+1)$ presented in Eq.\ref{seq:charclaset} tautologies are already eliminated. {\em Evaluation} of $CL_{NiA}(n+1)$ yields the following clause set $C(n)$, where $0\leq k\leq n$:
\[
\begin{array}{l}
C1 \equiv \vdash \alpha  \leq  \alpha \ \ C2 \equiv  max(\alpha, \beta) \leq \gamma \vdash \alpha \leq \gamma \ \ C3 \equiv max(\alpha, \beta)  \leq \gamma \vdash \beta \leq \gamma \\ 
C4(k)\equiv  f(\beta)  = k ,  f(\alpha)  = k  ,   s(\beta) \leq \alpha  \vdash \ \ 
C5\equiv \vdash f(\alpha) = 0 , \cdots , f(\alpha) = n
\end{array}
\]

\section{Refutation of the NiA-schema's Characteristic Clause Set Schema} \label{sec:refuteset}
In this section we provide a refutation of  $C(n)$ for every value of $n$. We prove this result by first deriving a set of clauses similar to the \textbf{TACNF} clause set of \cite{SimonThesis}; we will consider the members of this clause set the least elements of a well ordering. Then we show how resolution can be applied to this least elements to derive clauses of the form $f(\alpha)= i \vdash $ for $0\leq i \leq n$. The last step is simply to take the clause $(C5)$ from the clause set $C(n)$ and resolve it with each of the  $f(\alpha) = i\vdash $ clauses.

\begin{definition}
\index{Iterated Max term}
\label{def:maxterm}
We define the primitive recursive term $m(k,x,t)$, where $x$ is a schematic variable and $t$ a term: $\left\lbrace m(k+1,x,t) \Rightarrow  \right. $ $ \left. m(k,x,max(s(x_{k+1}),t))  \ ; \ m(0,t) \Rightarrow t \right\rbrace$.

\end{definition}

\begin{definition}
\index{Resolution Rule}
\label{def:resstep}
We define the resolution rule $res(\sigma,P)$ where $\sigma$ is a unifier and $P$ is a predicate as follows:
\begin{prooftree}
\AxiomC{$\begin{array}{c}\Pi \vdash P^*, \Delta \end{array}$}
\AxiomC{$\begin{array}{c} \Pi' , P^{**}  \vdash \Delta'  \end{array}$}
\RightLabel{$res(\sigma,P)$} 
\BinaryInfC{$\begin{array}{c}\Pi\sigma , \Pi'\sigma \vdash \Delta\sigma , \Delta'\sigma\end{array}$}
\end{prooftree}
 The predicates $P^*$ and $P^{**}$ are defined such that $P^{**}\sigma = P^*\sigma = P$. Also, there are no occurrences of $P$ in $\Pi'\sigma$ and $P$ in $\Delta\sigma$.
\end{definition}
This version of the resolution rule is not complete for unsatisfiable clause sets, it is only introduced to simplify the outline of the refutation.

\begin{lemma}
\label{lem:first}
Given $0\leq k$ and $0 \leq n$, the clause $\vdash t \leq m(k,x,t)$ is derivable by resolution from $C(n)$.
\end{lemma}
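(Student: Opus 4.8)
The plan is to prove Lemma~\ref{lem:first} by induction on $k$, using the auxiliary clauses $C2$ and $C3$ from $C(n)$ together with the iterated-max term $m(k,x,t)$ of Definition~\ref{def:maxterm}. The key observation is that $m(k,x,t)$ is built by repeatedly wrapping $t$ in $max(s(x_{i}),\cdot)$, and each such wrapping is exactly the shape that $C2$ (or $C3$) can ``peel off'' by resolution on a $\leq$-atom, provided we instantiate the free variables of $C2$ appropriately. So the resolution derivation of $\vdash t \leq m(k,x,t)$ will itself be defined by recursion on $k$, mirroring the recursion in $m$.

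For the base case $k=0$ we have $m(0,x,t) \Rightarrow t$ by Definition~\ref{def:maxterm}, so the claim $\vdash t \leq t$ is just $C1$ (modulo the $\varepsilon$-rule rewriting $m(0,x,t)$ to $t$, and an instantiation $\alpha \mapsto t$). For the step case, assume by induction hypothesis that $\vdash t' \leq m(k,x,t')$ is derivable for every term $t'$; in particular take $t' = max(s(x_{k+1}),t)$, giving $\vdash max(s(x_{k+1}),t) \leq m(k,x,max(s(x_{k+1}),t))$, and note that the right-hand side rewrites via the $\varepsilon$-theory to $m(k+1,x,t)$. Now instantiate $C3 \equiv max(\alpha,\beta)\leq\gamma \vdash \beta \leq \gamma$ with $\alpha \mapsto s(x_{k+1})$, $\beta \mapsto t$, $\gamma \mapsto m(k+1,x,t)$, obtaining $max(s(x_{k+1}),t) \leq m(k+1,x,t) \vdash t \leq m(k+1,x,t)$. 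One application of the resolution rule $res(\sigma,P)$ of Definition~\ref{def:resstep}, with $P$ the atom $max(s(x_{k+1}),t)\leq m(k+1,x,t)$ and $\sigma$ the identity (the atoms already match syntactically), cuts this against the instantiated induction hypothesis and yields exactly $\vdash t \leq m(k+1,x,t)$, completing the step.

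The main obstacle I anticipate is bookkeeping with the $\varepsilon$-rule and the extra variables $x_{k+1}$: one must check that the rewriting $m(k,x,max(s(x_{k+1}),t)) \to m(k+1,x,t)$ is licensed by the recursion equations in Definition~\ref{def:maxterm} read in the right direction, and that the fresh schematic-variable indices introduced at each recursive step do not clash — this is where care is needed to make the derivation a genuine (schematic) object rather than merely a family of finite derivations. A secondary point is that the statement is ``derivable by resolution from $C(n)$'' for the restricted resolution rule of Definition~\ref{def:resstep}, which is noted to be incomplete; but since the derivation we build uses only direct resolutions on atoms that already unify, completeness is not needed, and the derivation goes through with the identity unifier at every step. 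Once this lemma is in place, the instances $\vdash \alpha \leq m(k,x,\alpha)$ (choosing $t = \alpha$) are precisely the ``least elements'' alluded to at the start of Section~\ref{sec:refuteset}, from which the clauses $f(\alpha)=i \vdash$ and finally the empty clause will be derived in the subsequent lemmas.
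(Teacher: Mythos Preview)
Your proposal is correct and follows essentially the same route as the paper: induction on $k$, base case via $C1$, and step case by taking the induction hypothesis at $t'=\max(s(x_{k+1}),t)$ and resolving against $C3$ to strip the outer $\max$, with the $\varepsilon$-rule folding $m(k,x,\max(s(x_{k+1}),t))$ into $m(k+1,x,t)$. The only cosmetic difference is that the paper carries the unfolded term through the resolution and applies the $\varepsilon$-rewrite afterwards (with the unifier $\sigma$ doing the instantiation of $C3$), whereas you pre-instantiate and rewrite first; the derivations are the same up to this reordering.
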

\begin{proof}
Let us consider the case when $k=0$, the clause we would like to show derivability of is $\vdash t \leq m(0,t)$, which is equivalent to the clause $\vdash t \leq t$, an instance of (C1).
Assuming the lemma holds for all $m<k+1$, we show that the lemma holds for $k+1$. By the  induction hypothesis, the instance  $\vdash max(s(x_{k+1}),t') \leq m(k,x,max(s(x_{k+1}),t'))$ is  derivable. Thus, the following derivation proves that the clause $\vdash t' \leq m(k+1,x_{k+1},t')$, where $t= max(s(x_{k+1}),t')$ for some term $t'$ is derivable:

\begin{prooftree}

\AxiomC{$\begin{array}{c}(IH) \\ \vdash P \end{array}$}
\AxiomC{$\begin{array}{c}  (C3) \\ max(\beta ,\delta ) \leq \gamma  \vdash  \delta \leq  \gamma  \end{array}$}
\RightLabel{$res(\sigma,P)$} 
\BinaryInfC{$\begin{array}{c}\vdash  t \leq m(k,x,max(s(x_{k+1}),t)) \end{array}$}
\RightLabel{$\varepsilon$}

\UnaryInfC{$\begin{array}{c}\vdash  t \leq m(k+1,x,t) \end{array}$}
\end{prooftree}
$$P = max(s(x_{k+1}),t) \leq  m(k,x,max(s(x_{k+1}),t))$$
$$\sigma =\left\lbrace \beta  \leftarrow s(x_{k+1}),  \gamma \leftarrow m(k,x,max(s(x_{k+1}),t)) ,  \delta \leftarrow t \right\rbrace $$
\\
$\square$
\end{proof}
The following corollaries follow by simple derivation.
\begin{corollary}
\label{cor:first}
Given $0\leq k,n$, the clause $ \vdash s(x_{k+1})\leq m(k,x,max(s(x_{k+1}),t))$ is derivable by resolution from $C(n)$.
\end{corollary}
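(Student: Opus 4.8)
The plan is to derive Corollary~\ref{cor:first} directly from Lemma~\ref{lem:first} and clause $(C2)$ of $C(n)$, mirroring the resolution step already used in the inductive case of Lemma~\ref{lem:first} but resolving against $(C2)$ instead of $(C3)$. First I would instantiate Lemma~\ref{lem:first} at $k$ with $t$ replaced by $max(s(x_{k+1}),t)$, yielding the derivable clause
$$\vdash max(s(x_{k+1}),t) \leq m(k,x,max(s(x_{k+1}),t)),$$
which I will call $P$. Next I take clause $(C2)$, namely $max(\beta,\delta)\leq\gamma \vdash \beta\leq\gamma$, and apply $res(\sigma,P)$ with the unifier
$$\sigma = \left\lbrace \beta \leftarrow s(x_{k+1}),\ \gamma \leftarrow m(k,x,max(s(x_{k+1}),t)),\ \delta \leftarrow t \right\rbrace,$$
so that $P\sigma$ matches the atom $max(\beta,\delta)\leq\gamma$ on the left of $(C2)$ and the atom on the right of $P$. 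The resolvent is $\vdash s(x_{k+1}) \leq m(k,x,max(s(x_{k+1}),t))$, which is exactly the clause claimed, so no $\varepsilon$-rewriting is even needed at the end (unlike in Lemma~\ref{lem:first}).

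The derivation is displayed as follows:
\begin{prooftree}
\AxiomC{$\begin{array}{c}(\text{Lem.}~\ref{lem:first}) \\ \vdash P \end{array}$}
\AxiomC{$\begin{array}{c}  (C2) \\ max(\beta ,\delta ) \leq \gamma  \vdash  \beta \leq  \gamma  \end{array}$}
\RightLabel{$res(\sigma,P)$}
\BinaryInfC{$\begin{array}{c}\vdash  s(x_{k+1}) \leq m(k,x,max(s(x_{k+1}),t)) \end{array}$}
\end{prooftree}
with $P$ and $\sigma$ as above. I would remark that this is a genuine corollary: the only input beyond $C(n)$ is the already-established Lemma~\ref{lem:first}, and the single resolution inference closes the gap.

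I do not expect any real obstacle here; the statement is deliberately a ``simple derivation'' as the text announces. The only point requiring a moment's care is checking the side conditions of the resolution rule $res(\sigma,P)$ from Definition~\ref{def:resstep}: we need $P^{**}\sigma = P^*\sigma = P\sigma$ and no residual occurrences of the resolved atom in the remaining parts of the premises. Since $P$'s clause has empty antecedent and $(C2)$ has the single atom $\beta\leq\gamma$ in its succedent, both conditions hold trivially after applying $\sigma$. Hence the corollary follows.
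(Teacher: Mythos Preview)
Your proof is correct and essentially identical to the paper's own proof: the paper also instantiates Lemma~\ref{lem:first} with $t$ replaced by $max(s(x_{k+1}),t)$ and resolves the resulting clause against $(C2)$ using exactly the same unifier $\sigma$. Your additional remark checking the side conditions of Definition~\ref{def:resstep} is a nice touch that the paper omits.
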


\begin{corollary}
\label{cor:second}
Given $0\leq k$ and $0 \leq n$, the clause $  f(x_{k+1})= i, $\\ $f(m(k,x,max(s(x_{k+1}),t))) = i \vdash$ for $0\leq i \leq n$ is derivable by resolution from $C(n)$.
\end{corollary}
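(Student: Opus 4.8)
\emph{Approach.} The plan is to obtain the clause of the statement by a single resolution step that combines Corollary~\ref{cor:first} with the clause $C4(i)$ of $C(n)$. Corollary~\ref{cor:first} already gives a resolution derivation from $C(n)$ of $\vdash s(x_{k+1}) \leq m(k,x,max(s(x_{k+1}),t))$, so it suffices to eliminate this $\leq$-atom against the only clause of $C(n)$ in which a literal of the shape $s(\cdot)\leq(\cdot)$ occurs on the left, namely $C4(i) \equiv f(\beta)=i,\ f(\alpha)=i,\ s(\beta)\leq\alpha\vdash$. This instance is available precisely because $0\leq i\leq n$, which is part of the hypothesis.

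\emph{Main step.} Next I would apply the resolution rule $res(\sigma,P)$ of Definition~\ref{def:resstep} with $P = s(x_{k+1})\leq m(k,x,max(s(x_{k+1}),t))$ and $\sigma = \{\beta \leftarrow x_{k+1},\ \alpha \leftarrow m(k,x,max(s(x_{k+1}),t))\}$:
\begin{prooftree}
\AxiomC{$\vdash s(x_{k+1})\leq m(k,x,max(s(x_{k+1}),t))$}
\AxiomC{$f(\beta)=i,\ f(\alpha)=i,\ s(\beta)\leq\alpha\vdash$}
\RightLabel{$res(\sigma,P)$}
\BinaryInfC{$f(x_{k+1})=i,\ f(m(k,x,max(s(x_{k+1}),t)))=i\vdash$}
\end{prooftree}
Since $s(\beta)\sigma = s(x_{k+1})$ and $\alpha\sigma = m(k,x,max(s(x_{k+1}),t))$, the resolved literal on both sides is $P$, and the conclusion is verbatim the clause claimed in the corollary. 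The substitution $\sigma$ acts only on the variables $\alpha,\beta$ of $C4(i)$, hence is trivially a well-formed unifier, and the side conditions of Definition~\ref{def:resstep} (no residual occurrences of $P$ in the remaining literals) hold vacuously.

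\emph{Obstacle.} There is essentially no obstacle: the corollary is an immediate consequence of an already-established derivation (Corollary~\ref{cor:first}) followed by one application of $res$. The only things requiring care are bookkeeping — confirming that $C4(i)$ is in $C(n)$ for every $i$ in the stated range and that the head literal produced by Corollary~\ref{cor:first} matches the $\leq$-literal of $C4(i)$ under $\sigma$ — both of which are routine.
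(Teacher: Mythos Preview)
Your proposal is correct and matches the paper's own proof essentially verbatim: the paper likewise derives the clause by a single application of $res(\sigma,P)$ that resolves the $\leq$-atom from Corollary~\ref{cor:first} against $C4(i)$, with the same substitution (up to the harmless swap of the names $\alpha,\beta$). There is nothing to add.
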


\begin{corollary}
\label{cor:third}
Given $0\leq k$ and $0 \leq n$, the clause $  f(x_{k+1})= i, f(m(k,x_{k},s(x_{k+1}))) = i \vdash$ for $0\leq i \leq n$ is derivable by resolution from $C(n)$.
\end{corollary}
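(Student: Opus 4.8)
Corollary~\ref{cor:third} is a minor variant of Corollary~\ref{cor:second}: it is essentially the clause of Corollary~\ref{cor:second} with the auxiliary term parameter $t$ ``closed off''. The plan is to run the same two--step resolution derivation used for Corollaries~\ref{cor:first}--\ref{cor:second}, but seeded so that no parameter survives. First I would read $m(k,x_k,s(x_{k+1}))$ as $m(k,x,s(x_{k+1}))$, the subscript on the schematic variable being only a reminder that the components $x_1,\dots,x_k$ occur in the unfolded $max$-term (this matches the notation $m(k+1,x_{k+1},t')$ already used in the proof of Lemma~\ref{lem:first}); note that then $s(x_{k+1})$ is one of the arguments of the iterated $max$ denoted by $m(k,x,s(x_{k+1}))$.

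The derivation itself is parametric in $i\in\{0,\dots,n\}$. Step one: instantiate the term parameter in Lemma~\ref{lem:first} by $t:=s(x_{k+1})$, obtaining that $\vdash s(x_{k+1})\leq m(k,x,s(x_{k+1}))$ is derivable by resolution from $C(n)$. Step two: apply $res(\sigma,P)$ with $P := s(x_{k+1})\leq m(k,x,s(x_{k+1}))$ to that clause and to the instance $f(\beta)=i,\, f(\alpha)=i,\, s(\beta)\leq\alpha\vdash$ of $C4$, under the unifier $\sigma=\{\beta\leftarrow x_{k+1},\ \alpha\leftarrow m(k,x,s(x_{k+1}))\}$. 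The resolvent is exactly $f(x_{k+1})=i,\, f(m(k,x,s(x_{k+1})))=i\vdash$, and letting $i$ range over $\{0,\dots,n\}$ produces the whole family, as required.

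A second, equally short route starts from Corollary~\ref{cor:second} directly: instantiate its free parameter by $t:=0$ and apply one $\varepsilon$-inference rewriting $max(s(x_{k+1}),0)$ to $s(x_{k+1})$, which is sound since $\varepsilon\models max(y,0)=y$ in the intended model of the primitive recursive term algebra of Def.~\ref{def:maxterm} (alternatively $t:=s(x_{k+1})$ with idempotence of $max$, or $t:=x_{k+1}$ with $max(s(y),y)=s(y)$). Since no resolution idea beyond Corollaries~\ref{cor:first}--\ref{cor:second} enters, this genuinely is a ``simple derivation''. I expect the only point worth checking carefully to be the $\omega$-index bookkeeping so that the $m$-terms line up across Lemma~\ref{lem:first}, Def.~\ref{def:maxterm} and the corollary statement; if the $\varepsilon$-route is taken, the chosen $max$ identity must additionally be verified to be an $\varepsilon$-consequence, but the first route avoids this altogether, leaving the index arithmetic as the only (minor) obstacle.
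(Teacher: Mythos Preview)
Your first route is exactly the paper's own proof: instantiate Lemma~\ref{lem:first} with $t:=s(x_{k+1})$ to obtain $\vdash s(x_{k+1})\leq m(k,x,s(x_{k+1}))$ and then resolve against $C4(i)$ with the unifier sending the two individual variables to $x_{k+1}$ and $m(k,x,s(x_{k+1}))$. Your reading of the subscript in $m(k,x_k,\cdot)$ as a notational variant of $m(k,\overline{x},\cdot)$ is also how the paper treats it in the appendix.
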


\begin{definition}
\label{defclauseone}

Given $0 \leq n$, $-1\leq k \leq j \leq n$,a variable $z$, and a bijective function $b: \mathbb{N}_{n} \rightarrow \mathbb{N}_{n}$ we define the following formulae: 
\begin{equation*}
c_{b}(k,j,z) \equiv \bigwedge_{i=0}^{k} f(x_{b(i)}) = b(i) \vdash \bigvee_{i=k+1}^{j} f(m(n,x,z)) = b(i).
\end{equation*}
The formulae $c_{b}(-1,-1,z) \equiv \ \vdash$, and $c_{b}(-1,n,z) \equiv \ \vdash \bigvee_{i=0}^{n} f(z) = i$ for all values of $n$ .
\end{definition}

\begin{lemma}
\label{lem:lbase}
Given $0 \leq n$, $-1\leq k \leq n$ and for all bijective functions $b: \mathbb{N}_{n} \rightarrow \mathbb{N}_{n}$. the formula $c_{b}(k,n,z)$ is derivable by resolution from C(n).
\end{lemma}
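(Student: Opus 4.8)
The natural approach is induction on $k$: at each stage one peels a single literal off the succedent of $c_b$ and moves the corresponding equality into the antecedent, using the two‑literal ``conflict'' clauses supplied by the corollaries above. The case $k=-1$ needs no work: by definition $c_b(-1,n,z)$ is the clause $\vdash \bigvee_{i=0}^{n} f(z)=i$, which is $(C5)$ under the substitution $\{\alpha\leftarrow z\}$, and it does not depend on $b$.

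For $0\le k\le n$ I would argue by induction on $k$. For the base case $k=0$, instantiate $(C5)$ with $\{\alpha\leftarrow m(n,x,z)\}$ to get $\vdash \bigvee_{i=0}^{n} f(m(n,x,z))=i$; since $b$ is a bijection of $\{0,\dots,n\}$ this is, as a clause, the same as $\vdash \bigvee_{i=0}^{n} f(m(n,x,z))=b(i)$. Resolving it on the literal $f(m(n,x,z))=b(0)$ against a suitable instance $f(x_{b(0)})=b(0),\, f(m(n,x,z))=b(0)\vdash$ of the clause of Corollary~\ref{cor:second} yields $f(x_{b(0)})=b(0)\vdash \bigvee_{i=1}^{n} f(m(n,x,z))=b(i)$, i.e.\ $c_b(0,n,z)$. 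For the step, assume $c_b(k,n,z)$ is derivable for the given $b$ and some $0\le k<n$. Its succedent contains $f(m(n,x,z))=b(k+1)$, and, because $b$ is injective, contains it exactly once, so the resolution rule of Definition~\ref{def:resstep} applies with this literal. Resolving $c_b(k,n,z)$ on it against the instance $f(x_{b(k+1)})=b(k+1),\, f(m(n,x,z))=b(k+1)\vdash$ of Corollary~\ref{cor:second} deletes that literal from the succedent and adds $f(x_{b(k+1)})=b(k+1)$ to the antecedent, giving precisely $c_b(k+1,n,z)$; as $b$ was arbitrary, this establishes the claim for $k+1$ and all bijections $b$.

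The point that requires care, and which I expect to be the main obstacle, is producing the exact instance of Corollary~\ref{cor:second} used in each step: that corollary yields clauses whose second literal mentions a term of the form $f(m(m,x,max(s(x_{m+1}),t)))$, whereas above I need one mentioning the fixed term $m(n,x,z)$. This is bridged by the $\varepsilon$‑rule together with Definition~\ref{def:maxterm}: unrolling $m(n,x,z)$ finitely many times shows $m(n,x,z)=_{\varepsilon} m(m,x,max(s(x_{m+1}),t))$ on taking $m=b(k+1)-1$ and letting $t$ be the remaining tail $max(s(x_{b(k+1)+1}),\dots,max(s(x_n),z)\dots)$ (with $t=z$ when $b(k+1)=n$), so a single $\varepsilon$‑inference converts Corollary~\ref{cor:second}'s clause into the one we resolve with. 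One must also handle the extreme index values, where the side condition $0\le m$ of Corollary~\ref{cor:second} no longer applies and the boundary conflict clause has to be obtained instead from Corollaries~\ref{cor:first} and~\ref{cor:third} together with $(C2)$--$(C4)$ of $C(n)$; this index bookkeeping, rather than any conceptual difficulty, is the real work.
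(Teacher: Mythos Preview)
Your proof is correct and follows the same inductive scheme as the paper: induction on $k$, with $k=-1$ given by $(C5)$ and each successor step obtained by resolving against an instance of Corollary~\ref{cor:second}.

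The one noteworthy difference is how the term in the succedent is aligned with the term in Corollary~\ref{cor:second}. The paper keeps the third argument of $c_b$ as a genuine free variable $y$ in the inductive hypothesis $c_b(k,n,y)$ and, at each step, unifies via the substitution $\sigma=\{\,y\leftarrow \max(s(x_{k+1}),z)\,\}$; thus the nested $\max$-tower is grown one layer per induction step, and Corollary~\ref{cor:second} is always invoked with its parameter equal to the current induction index $k\ge 0$. You instead instantiate $\alpha$ with the full term $m(n,x,z)$ already in the base case and then use $\varepsilon$-unrolling to expose a prefix matching Corollary~\ref{cor:second} with parameter $b(k+1)-1$. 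The paper's device makes the unification trivial and sidesteps the boundary case $b(k+1)=0$ that you (correctly) flag, at the price that the clause one carries changes its free variable at every stage; your route keeps the target term fixed throughout but must do the $\varepsilon$-bookkeeping you describe. The paper also splits off the degenerate case $n=0$ (where $b$ is forced to be the identity), which you absorb into the uniform argument.
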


\begin{definition}
\label{defclausetwo}
Given $0 \leq n$, $0\leq k \leq j \leq n$, and a bijective function $b: \mathbb{N}_{n} \rightarrow \mathbb{N}_{n}$ we define the following formulae: 

\begin{equation*}
c'_{b}(k,j) \equiv \bigwedge_{i=0}^{k} f(x_{i+1}) = b(i) \vdash \bigvee_{i=k+1}^{j} f(m(k,x_{k},s(x_{k+1})) = b(i).
\end{equation*}
\end{definition}
\begin{lemma}
\label{lem:lbase2}
Given $0 \leq n$, $0\leq k \leq n$ and for all bijective functions $b: \mathbb{N}_{n} \rightarrow \mathbb{N}_{n}$. the formula $c'_{b}(k,n)$ is derivable by resolution from C(n).
\end{lemma}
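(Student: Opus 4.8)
The plan is to prove, by induction on $k$, a seed-parametrised strengthening and then recover $c'_b(k,n)$ by one extra resolution; this mirrors the induction behind Lemma~\ref{lem:lbase}, but uses the ``tight'' iterated-max corollary (Corollary~\ref{cor:third}) in place of Corollary~\ref{cor:second}, and the device that makes the recursion on the max-term line up with Definition~\ref{def:maxterm} is to keep the seed symbolic until the very end. Fix an arbitrary bijection $b$, and write $m(k,x_k,t)$ for the iterated maximum of $s(x_1),\dots,s(x_k)$ with seed $t$ (so $m(0,x_0,t)\to t$ and $m(k,x_k,t)\to m(k-1,x_{k-1},max(s(x_k),t))$ by Definition~\ref{def:maxterm}). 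The claim I would establish is $G_b(k,t)$: for every term $t$, the clause $\bigwedge_{i=0}^{k-1} f(x_{i+1})=b(i)\vdash\bigvee_{i=k}^{n} f(m(k,x_k,t))=b(i)$ is derivable by resolution from $C(n)$.

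For the base case $k=0$: $m(0,x_0,t)$ rewrites to $t$, so $G_b(0,t)$ is $\vdash\bigvee_{i=0}^{n} f(t)=b(i)$, which is exactly clause $(C5)$ instantiated by $\alpha\leftarrow t$ — the two clauses coincide because $b$ permutes the value set. For the inductive step, assume $G_b(k-1,\cdot)$ for all seeds. Given $t$, instantiate the hypothesis with seed $max(s(x_k),t)$; since $m(k-1,x_{k-1},max(s(x_k),t))\to m(k,x_k,t)$, this is the clause $\bigwedge_{i=0}^{k-2} f(x_{i+1})=b(i)\vdash\bigvee_{i=k-1}^{n} f(m(k,x_k,t))=b(i)$. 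Resolve it, on the $i=k-1$ disjunct $f(m(k,x_k,t))=b(k-1)$, against $f(x_k)=b(k-1),\,f(m(k,x_k,t))=b(k-1)\vdash$; the latter is derivable from $C(n)$ by the same recipe as Corollary~\ref{cor:third}, namely by resolving $(C4)$ at value $b(k-1)$ with $\vdash s(x_k)\leq m(k,x_k,t)$, which follows from Corollary~\ref{cor:first} since $m(k,x_k,t)$ maximises over $s(x_k)$. The resolvent has $f(x_k)=b(k-1)$ adjoined on the left and the resolved disjunct deleted on the right, i.e.\ it is precisely $G_b(k,t)$. Finally, Lemma~\ref{lem:lbase2} is obtained from $G_b(k,s(x_{k+1}))$ by one resolution against the instance of Corollary~\ref{cor:third} at index $k$ and value $b(k)$, namely $f(x_{k+1})=b(k),\,f(m(k,x_k,s(x_{k+1})))=b(k)\vdash$, on the atom $f(m(k,x_k,s(x_{k+1})))=b(k)$: this adjoins $f(x_{k+1})=b(k)$ on the left and removes the $i=k$ disjunct on the right, yielding $c'_b(k,n)$. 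Since $b$ was arbitrary, the lemma follows for all $0\leq k\leq n$.

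I expect the main work to be bookkeeping rather than a new idea. One must keep the recursion depth of $m$, the fresh variable index, the conjunction bound $k$, and the disjunction bounds $k+1,\dots,n$ all synchronised with the recursive definitions of $m$ and $c'_b$, so that each resolution step unifies with the intended Corollary~\ref{cor:third} instance; and one must take care that the seed is specialised to $s(x_{k+1})$ only at the last step, so that the max-term stays duplicate-free and literally matches the term occurring in Corollary~\ref{cor:third} and in $c'_b$. This is exactly where a verbatim ``copy the proof of Lemma~\ref{lem:lbase}'' would fail: there the witness term $m(n,x,z)$ is the same at every level of the recursion, whereas here it must grow with $k$, and controlling that growth through the seed parameter is what keeps the refutation — and hence the Herbrand system of Sec.~\ref{sec:CRLGSRC} — well-behaved.
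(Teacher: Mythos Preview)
Your proof is correct and follows the same inductive strategy the paper intends --- start from $C5$ and iteratively resolve against the two-literal negative clauses obtained from $(C4)$ together with the $\leq$-witnesses of Lemma~\ref{lem:first}/Corollary~\ref{cor:first}, exactly in the spirit of Corollaries~\ref{cor:second} and~\ref{cor:third}. The device of carrying a symbolic seed $t$ in the auxiliary claim $G_b(k,t)$ and specialising it to $s(x_{k+1})$ only at the last step is a cleaner way to manage the fact that the max-term in $c'_b$ has depth $k$ (and hence grows along the induction), whereas in $c_b$ it has fixed depth.

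It is worth noting that the paper's appendix proof handles the base case just as you do (one resolution of $C5$ against a Corollary~\ref{cor:third} instance), but its written inductive step is a verbatim copy of the step from Lemma~\ref{lem:lbase}: it takes $c_b(k,n,y)$ as hypothesis, resolves with a Corollary~\ref{cor:second} instance, and concludes $c_b(k+1,n,z)$ rather than $c'_b(k+1,n)$. That step does not literally yield the lemma, for precisely the reason you flag at the end of your proposal --- $c'_b(k,n)$ has no free seed variable to substitute into, so one cannot run the same argument without first re-introducing one. Your seed-parametrised strengthening is exactly the missing piece, so your argument in fact repairs the appendix sketch while remaining faithful to the approach announced in the main text.
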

The proofs of Lem.~\ref{lem:lbase} \&~\ref{lem:lbase2} follow from application of Cor.\ref{cor:third} to clause $C5$. Also, the set of clauses produced by   Lem.~\ref{lem:lbase} \& ~\ref{lem:lbase2} are of similar form to the \textbf{TACNF} clause set found in Section 6 of \cite{SimonThesis}. However, we allow for a varying term structure, and thus we deviate from the precise form. Though, of most importance, is the structure of the resulting refutation, and this difference does not get in the way in our case.

\begin{definition}
\label{def:importantordering}
Given $0\leq n$ we define the ordering relation $\lessdot_{n}$ over $A_{n} = \left\lbrace (i,j) | i\leq j \right. $ $\left. \wedge 0 \leq i,j \leq n \wedge i,j \in \mathbb{N} \right\rbrace$
 s.t. for $(i,j),(l,k) \in A_n$, $(i,j) \lessdot_{n} (l,k)$ iff  $i,k,l \leq n$, $j<n$, $l\leq i$, $k\leq j$, and $i = l \leftrightarrow j \not = k$ and $j = k \leftrightarrow i \not = l$.
\end{definition}

\begin{lemma}
The ordering $\lessdot_{n}$ over $A_{n}$ for $0\leq n$ is a complete well ordering.
\end{lemma}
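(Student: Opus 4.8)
The statement asserts that $\lessdot_n$ on $A_n = \{(i,j) \mid 0 \le i \le j \le n\}$ is a complete well ordering. Here "complete" presumably means total (linear), and "well ordering" means every nonempty subset has a least element — which, for a total order on a finite set, is automatic, so the real content is linearity plus a convenient description of the order type. The plan is to first unwind the definition of $\lessdot_n$ into an explicit, easy-to-check description of when $(i,j) \lessdot_n (l,k)$, then verify irreflexivity, transitivity, and trichotomy directly from that description, and finally observe that $A_n$ is finite so well-foundedness is immediate.

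**Step 1: Decode the relation.** The defining conditions are: $l \le i$, $k \le j$, $j < n$, and the biconditionals $i = l \leftrightarrow j \ne k$ and $j = k \leftrightarrow i \ne l$. First I would note these two biconditionals together say exactly that $(i,j)$ and $(l,k)$ agree in exactly one coordinate — either $i = l$ and $j \ne k$, or $i \ne l$ and $j = k$ — so no "diagonal" jumps are allowed, and also $(l,k) \ne (i,j)$, giving irreflexivity for free. Combined with $l \le i$, $k \le j$, this means $(l,k) \lessdot_n (i,j)$ holds precisely when $(l,k)$ is obtained from $(i,j)$ by strictly decreasing one coordinate while leaving the other fixed (subject to the boundary proviso $j < n$ on the \emph{larger} element's second coordinate). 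I would then show this single-step description generates a linear order by exhibiting the order explicitly: the minimum element is $(-1,-1)$-style bottom if we follow Def.~\ref{defclauseone}, but working within $A_n$ proper, I expect $(0,0) \lessdot_n (0,1) \lessdot_n \cdots \lessdot_n (0,n)$, then moving to $(1,1) \lessdot_n (1,2) \lessdot_n \cdots$, etc. — i.e. the order is essentially lexicographic on $(i,j)$ but traversed so that the refutation in Sec.~\ref{sec:refuteset} can walk it, with the $j < n$ clause ensuring the top row $j = n$ consists of $\lessdot_n$-maximal elements handled last. I would make this precise by defining a rank function $\mathrm{rk}(i,j)$ (a linear combination counting how many pairs precede $(i,j)$) and proving $(i,j) \lessdot_n (l,k) \iff \mathrm{rk}(i,j) < \mathrm{rk}(l,k)$; once such an order-embedding into $(\mathbb{N}, <)$ is in place, linearity, transitivity, irreflexivity and well-foundedness all transfer instantly.

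**Step 2: The verification.** Transitivity is the one condition that genuinely needs the biconditionals rather than just $\le$: if $(i,j) \lessdot_n (l,k) \lessdot_n (m,p)$, each step changes exactly one coordinate, and I must check the composite still changes exactly one coordinate (it cannot "change one then change the other back", since both steps are strictly monotone in the same direction, so a coordinate that moves in step one cannot return). The boundary condition $j < n$ propagates correctly because it constrains the second coordinate of the \emph{smaller} pair, and $\le$-monotonicity keeps us inside $A_n$. Trichotomy: given distinct $(i,j),(l,k) \in A_n$, compare ranks; equivalently, if they differ in both coordinates one can still route between them through an intermediate pair, which is exactly what the well ordering $\lessdot_n$ must certify, so I would argue via the rank function that exactly one of $\mathrm{rk}(i,j) < \mathrm{rk}(l,k)$, $=$, $>$ holds.

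**Main obstacle.** The genuine difficulty is not well-foundedness (finiteness kills that) but checking that $\lessdot_n$ as literally written is actually \emph{transitive and total} — the single-coordinate-change condition is fragile, and one must be careful that the boundary clause $j < n$ and the asymmetric $l \le i$, $k \le j$ constraints do not break transitivity at the edges of $A_n$. I expect the cleanest route is to guess the rank function $\mathrm{rk}$ up front, prove the equivalence $(i,j) \lessdot_n (l,k) \Leftrightarrow \mathrm{rk}(i,j) < \mathrm{rk}(l,k)$ by a short case analysis on which coordinate agrees, and then get every order axiom as a corollary; the bulk of the work is pinning down $\mathrm{rk}$ and handling the $j = n$ versus $j < n$ split.
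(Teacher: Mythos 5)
There is a genuine gap, and it sits exactly where you located the ``main obstacle.'' The two biconditionals in Definition~\ref{def:importantordering} are logically equivalent to one another and together say that $(i,j)$ and $(l,k)$ agree in \emph{exactly one} coordinate; combined with $l\le i$, $k\le j$ this makes $\lessdot_n$ a single-coordinate-step relation, exactly as you decoded it. But such a relation is neither total nor transitive, so the centerpiece of your plan --- a rank function with $(i,j)\lessdot_n(l,k)\Leftrightarrow \mathrm{rk}(i,j)<\mathrm{rk}(l,k)$ --- cannot exist. For totality: with $n\ge 2$ the pairs $(0,1)$ and $(1,2)$ differ in both coordinates, hence are incomparable in either direction. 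For transitivity: your safeguard (``a coordinate that moves cannot return'') blocks the wrong failure mode; the real problem is that the first step may move the first coordinate and the second step the second coordinate, so the composite changes \emph{both}. Concretely, for $n=3$ one has $(2,2)\lessdot_3(1,2)$ (second coordinates equal, first decreases) and $(1,2)\lessdot_3(1,1)$ (first coordinates equal, second decreases), yet $(2,2)\not\lessdot_3(1,1)$ since they differ in both coordinates. Any order-embedding into $(\mathbb{N},<)$ satisfying your ``iff'' would force both totality and transitivity, so it is unobtainable, and every axiom you planned to derive as a corollary falls with it. You also have the orientation reversed: since $(i,j)\lessdot_n(l,k)$ requires $l\le i$ and $k\le j$, going \emph{down} in $\lessdot_n$ \emph{increases} the coordinates, and the pairs $(i,n)$ are the minimal elements (they are precisely the base cases of the induction in Lem.~\ref{lem:l13}), not the top row handled last.

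The paper's own one-line proof reads ``complete'' not as linearity but as chain-completeness: it asserts only irreflexivity, antisymmetry, transitivity, and that every chain has a greatest lower bound of the form $(i,n)$ --- which is all the well-founded induction of Lem.~\ref{lem:l13} actually requires. The salvageable core of your idea is to keep the rank function but demand only one direction of monotonicity: $\mathrm{rk}(i,j)=2n-i-j$ satisfies $(i,j)\lessdot_n(l,k)\Rightarrow \mathrm{rk}(i,j)<\mathrm{rk}(l,k)$, because $l\le i$ and $k\le j$ with exactly one inequality strict gives $l+k<i+j$. This single observation yields irreflexivity, antisymmetry, and well-foundedness of (the transitive closure of) $\lessdot_n$ on the finite set $A_n$, with the $(i,n)$ as the minimal elements; totality is simply false and should not be attempted.
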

\begin{proof}
Every chain has a greatest lower bound\index{Greatest Lower Bound}, namely, one of the members of $A_{n}$, $(i,n)$  where $0\leq i \leq n$, and it is transitive, anti-reflexive, and anti-symmetric.
\end{proof}

The clauses proved derivable by Lem. \ref{lem:lbase2} can be paired with members of  $A_{n}$ as follows,  $c'_{b}(k,n)$ is paired with $(k,n)$. Thus, each $c'_{b}(k,n) $ is essentially the greatest lower bound of some chain in the ordering $\lessdot_{n}$ over $A_{n}$.

\begin{lemma}
\label{lem:l13}
Given  $0\leq k \leq j\leq n$, for all bijective functions $b: \mathbb{N}_{n} \rightarrow \mathbb{N}_{n}$ the clause $c'_{b}(k,j)$ is derivable from C(n).
\end{lemma}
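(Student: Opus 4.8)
The plan is to prove the claim by well-founded induction along the ordering $\lessdot_n$ on $A_n$ (Definition~\ref{def:importantordering}), climbing up from the clauses $c'_b(k,n)$, which by Lemma~\ref{lem:lbase2} are already derivable from $C(n)$ and which sit at the $\lessdot_n$-bottom as the greatest lower bounds of the chains in $A_n$. Thus the base case ($j=n$, every bijection $b$) is exactly Lemma~\ref{lem:lbase2}. For the step, fix $(k,j)\in A_n$ with $j<n$ and a bijection $b$, and assume $c'_{b'}(k',j')$ is derivable from $C(n)$ for every bijection $b'$ and every $(k',j')\lessdot_n(k,j)$. The two directions in which $\lessdot_n$ descends below $(k,j)$ are a longer disjunction on the right, $(k,j')$ with $j<j'<n$, and a longer conjunction on the left, $(k'',j)$ with $k<k''\le j$; so in effect the recursion counts $j$ down from $n$ and, for each fixed $j$, counts $k$ down from $j$.

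The real work is in the step. I would start from $c'_b(k,j+1)$ — available by the induction hypothesis when $j+1<n$, and directly from Lemma~\ref{lem:lbase2} when $j+1=n$ — whose succedent carries exactly one disjunct more than the target, namely $f(m(k,x_k,s(x_{k+1})))=b(j+1)$. I would resolve that disjunct away against the instance of Corollary~\ref{cor:third} with value $i=b(j+1)$, i.e.\ $f(x_{k+1})=b(j+1),\,f(m(k,x_k,s(x_{k+1})))=b(j+1)\vdash$, choosing the unifier so that only a renaming of $x_1,\dots,x_{k+1}$ is involved. The resolvent has the succedent of $c'_b(k,j)$ but a spurious extra antecedent literal $f(x_{k+1})=b(j+1)$. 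That literal is then discharged by resolving against clauses $c'_{b'}(k'',j)$ for suitable indices $k''$ and bijections $b'$ lying $\lessdot_n$-below $(k,j)$: one instantiates the relevant point-variables by iterated max-terms $m(\cdot,x,\cdot)$ of Definition~\ref{def:maxterm} so that the $f$-atoms coincide, applies the equational inference rule $(\varepsilon)$ to renormalise the nested $\max$-terms back to the canonical shape $m(k,x_k,s(x_{k+1}))$, and merges duplicated literals (the resolution rule of Definition~\ref{def:resstep} is, by design, loose enough to permit this). Letting $b$ range over all bijections yields $c'_b(k,j)$ for all of them; at the foot of the descent, $k=j=0$, one obtains precisely the unit clauses $f(x_1)=i\vdash$ that the section's plan resolves against $C5$.

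The hard part — and the reason the recursion must be indexed by $\lessdot_n$ and must draw on clauses built from \emph{every} bijection rather than only the one under consideration — is controlling these spurious antecedent literals together with the exact form of the iterated max-terms: each resolution step that shortens the right-hand disjunction either lengthens the left-hand conjunction or perturbs the point term, and the bookkeeping closes only if one can reintroduce, from strictly $\lessdot_n$-smaller positions, clauses whose conjunctions and max-terms cancel that perturbation. Checking that the $\varepsilon$-normalisation always delivers exactly $m(k,x_k,s(x_{k+1}))$ — or, failing that, a term admissible under the ``varying term structure'' proviso remarked after Lemma~\ref{lem:lbase2} — is the single delicate though routine computation the argument has to carry out.
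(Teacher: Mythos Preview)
Your induction scheme and base case are those of the paper, and you correctly see that the step begins with $c'_b(k,j+1)$ and must eliminate the extra succedent literal $f(m(k,x_k,s(x_{k+1})))=b(j+1)$. The gap is in what you resolve that literal against. Using Corollary~\ref{cor:third} with $i=b(j+1)$ forces the max-terms to coincide, so the unifier identifies the variables of the corollary with $x_1,\dots,x_{k+1}$, and the residual antecedent literal is $f(x_{k+1})=b(j+1)$ on the \emph{same} variable $x_{k+1}$ that already carries $f(x_{k+1})=b(k)$ in $\Pi_b(k)$. Any later resolution that removes this literal must substitute $x_{k+1}$ by a max-term, which simultaneously rewrites every occurrence of $m(k,x_k,s(x_{k+1}))$ in $\Delta_b(k,j)$ into a strictly deeper term $m(k,x_k,s(m(\cdots)))$. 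The $\varepsilon$-rule cannot repair this: the equational theory only folds and unfolds the recursive definition of $m$, and the two terms differ in the number of nested $\max$ applications, so they are not $\varepsilon$-equal. The ``varying term structure'' remark after Lemma~\ref{lem:lbase2} does not license changing the shape of the clauses $c'_b(k,j)$ themselves.

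The paper sidesteps this entirely. Instead of Corollary~\ref{cor:third} it uses the induction hypothesis at the \emph{diagonal} point $(k+1,k+1)$ with a second bijection $b'$ chosen so that $b'(k+1)=b(j+1)$ and $b'(i)=b(i)$ for $0\le i\le k$. The clause $c'_{b'}(k+1,k+1)$ has empty succedent and its last antecedent literal involves the \emph{fresh} variable $x_{k+2}$; substituting $x_{k+2}\mapsto m(k,x_k,s(x_{k+1}))$ matches the surplus disjunct of $c'_b(k,j+1)$ without touching any other literal. One resolution step plus one left-contraction of $\Pi_{b'}(k)$ against $\Pi_b(k)$ (they are identical by the choice of $b'$) then yields $c'_b(k,j)$ directly. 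So the freedom to vary the bijection, which you rightly flag as essential, is exploited not to renormalise perturbed max-terms but to produce a partner clause whose antecedent lines up with $\Pi_b(k)$ after a single substitution of a fresh variable.
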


\begin{proof}
We will prove this lemma by induction over $A_{n}$.  The base cases are the clauses $c'_{b}(k,n)$ from Lem. \ref{lem:lbase2}.  Now let us assume that the lemma holds for all clauses  $c'_{b}(k,i)$  pairs such that, $0\leq k \leq j <i \leq n$ and for all clauses $c'_{b}(w,j)$ such that $0\leq k< w \leq j\leq n$, then we want to show that the lemma holds for the clause $c'_{b}(k,j)$. We have not made any restrictions on the bijections used, we will need two different bijections to prove the theorem. The following derivation provides proof:
\begin{prooftree}
\AxiomC{$\begin{array}{c}(IH[k,j+1]) \\ \Pi_{b}(k), \vdash \Delta_{b}(k,j), P_{b}(j+1)\end{array}$}
\AxiomC{$\begin{array}{c}  (IH[k+1,k+1]) \\  \Pi_{b'}(k),  f(x_{b'(k+1)}) = b'(k+1) \vdash  \end{array}$}
\RightLabel{$res(\sigma,P)$} 
\BinaryInfC{$\begin{array}{c}   \Pi_{b}(k), \Pi_{b'}(k) \vdash \Delta_{b}(k,j) \end{array}$}
\RightLabel{$c:l$} 
\UnaryInfC{$\begin{array}{c}   \Pi_{b}(k)\vdash \Delta_{b}(k,j) \\ c'_{b}(k,j)\end{array}$}
\end{prooftree}
\begin{minipage}{.47\textwidth}
 $P_{b}(k+1) = f(m(k,x_{k},s(x_{k+1}))) = b(k+1)$,
 \end{minipage}
\begin{minipage}{.5\textwidth}
\begin{center}
$\Pi_{b}(k) \equiv \bigwedge_{i=0}^{k} f(x_{b(i)}) = b(i)$,
\end{center}
\end{minipage}
$$\Delta_{b}(k,j) \equiv \bigvee_{i=k+1}^{j} f(m(k,x_{k},s(x_{k+1}))) = b(i), $$ 
$$\sigma =\left\lbrace x_{b'(k+1)}  \leftarrow m(k,x_{k},s(x_{k+1}))\right\rbrace $$

We assume that $b'(k+1) = b(j+1)$ and that $b'(x)=b(x)$ for $0 \leq x \leq k$. 
\end{proof}

\begin{theorem}
\label{thm:refofC}
Given $n \geq 0$, $C(n)$ derives $\vdash$. 
\end{theorem}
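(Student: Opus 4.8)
The plan is to treat Theorem~\ref{thm:refofC} as a short corollary of Lemma~\ref{lem:l13}: almost all of the work has already been done, and only two elementary steps remain. First I would instantiate Lemma~\ref{lem:l13} at the bottom pairs $(0,0)$. Unwinding Definition~\ref{defclausetwo}, the antecedent of $c'_{b}(0,0)$ is the single literal $f(x_{1})=b(0)$ and the succedent $\bigvee_{i=1}^{0}(\cdots)$ is vacuous, so $c'_{b}(0,0)\equiv f(x_{1})=b(0)\vdash$. Because Lemma~\ref{lem:l13} holds for \emph{every} bijection $b:\mathbb{N}_{n}\to\mathbb{N}_{n}$ and $b(0)$ takes each value $0\le i\le n$ as $b$ varies, this yields, for every such $i$, a resolution derivation from $C(n)$ of the unit clause $f(x_{1})=i\vdash$, i.e.\ (after renaming the free variable) $f(\alpha)=i\vdash$.

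Second, I would finish with a linear chain of resolution steps against $C5\equiv\ \vdash f(\alpha)=0,\ldots,f(\alpha)=n$, exactly as anticipated at the start of Section~\ref{sec:refuteset}. Resolving $C5$ with $f(\alpha)=0\vdash$ on the atom $f(\alpha)=0$, using the renaming unifier that identifies the two free variables, produces $\vdash f(\alpha)=1,\ldots,f(\alpha)=n$; the side condition of the rule $res(\sigma,P)$ in Definition~\ref{def:resstep} is met since $f(\alpha)=0$ is syntactically distinct from every surviving succedent literal. Iterating for $i=1,\ldots,n$ removes one literal per step, and after $n+1$ steps the succedent is empty, producing $\vdash$.

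I do not expect a genuine obstacle in the proof of this theorem itself: the recursion has been absorbed into Lemmas~\ref{lem:first}--\ref{lem:l13}, and it is Lemma~\ref{lem:l13} that carries the real difficulty --- the induction over the well ordering $\lessdot_{n}$ on $A_{n}$ with its two coordinated bijections, together with the iterated-$\max$ bookkeeping of Lemma~\ref{lem:first} and Corollaries~\ref{cor:first}--\ref{cor:third}. The points that still need a little care are purely bookkeeping: verifying that the degenerate instance $c'_{b}(0,0)$ really collapses to the intended unit clause, that the weak rule $res(\sigma,P)$ is applicable at every link of the final chain so that its incompleteness does no harm, and that the variable renamings between the independently derived unit clauses and $C5$ are arranged so that each unifier along the chain is well defined.
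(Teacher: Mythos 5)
Your proposal is correct and follows essentially the same route as the paper: the paper's own proof likewise obtains the unit clauses $f(x)=i\vdash$ (for $0\le i\le n$) from Lemma~\ref{lem:l13} and then resolves them one by one against $C5$ to reach $\vdash$. Your additional care in identifying these unit clauses as the degenerate instances $c'_{b}(0,0)$ for varying $b$, and in checking the side condition of $res(\sigma,P)$ along the final chain, only makes explicit what the paper leaves implicit.
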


\begin{proof}
By Lem. \ref{lem:l13}, The clauses $f(x)=  0 \vdash $ , $\cdots$ , $f(x)= n \vdash $ are derivable. Thus, we can resolve them with C5 and get $\vdash$.
\end{proof}

The reason that the above resolution refutation cannot be formalized in the resolution calculus of Sec.~\ref{ResProSch} is the necessity of allowing any bijective function to label the $\omega$-terms in Def.~\ref{defclauseone} \&~\ref{defclausetwo}. It is entirely possible, though not very likely, that another refutation avoids these issues, but given the result of \cite{SimonThesis}, this issue will eventually have to be dealt with if we want to deal with more  proof schema in terms of cut elimination. The clause set introduced   in \cite{SimonThesis}, in the schematic setting, would require any permutation of the $\omega$-terms as well. Also, such a clause set would be refuted exactly as Thm.~\ref{thm:refofC} and  Lem.~\ref{lem:l13} refute ours.  In the next section we introduce {\em carriage return list} as an alternative to indexing the resolution refutation by $\omega$-terms. 

\section{Carriage Return List and a Generalization of the Schematic Resolution Calculus}
\label{sec:CRLGSRC}
In this section we introduce carriage return list and a new schematic resolution calculus using them to index the recursion. 
\subsection{Carriage Return List}
Carriage return lists are essentially list with a pointer to an arbitrary position in the list and two operations defined for them, {\em carriage return} and {\em shift}. The carriage return operator deletes the element at the pointer and returns the pointer to the first position and the shift operator shifts the pointer to the right. The carriage return is the essential operator for formalization of the resolution refutation from the previous section because it allows us to consider an arbitrary $\omega$-term at any position in the recursion tree. 

\begin{definition}[$\omega$-list of length $n$]
A $\omega$-list of length $n$, for $n\in \mathbb{N}$, is the empty list $\Olist{\ }{\ }$ when $n=0$, or $\Olist{m}{T}$ where $m$ is an $\omega$-term and $T$ is an $\omega$-list of length $n-1$. When it is not essential we will write $\omega$-list rather than $\omega$-list of length $n$. 
\end{definition}
Given an $\omega$-list $L = \Olist{m}{T}$, $L.1 = m$ and $L.2 = T$. We will refer to the list $\Olist{n}{\Olist{n-1}{\cdots \Olist{0}{\Olist{\ }{\ }}\cdots }}$ as the canonical  $\omega$-list of length $n$. When possible we will abbreviate the list as follows $\Olist{n }{(n-1),(n-2),\cdots ,0 }$.
\begin{definition}[$\omega$-list concatenation]
Given $\omega$-lists $L$ and $H$, $L\circledcirc H$ is defined as $\left\lbrace 
L\circledcirc H =  \Olist{L.1}{ L.2 \circledcirc H} \ ; \ \Olist{\ }{\ }  \circledcirc H = H\right\rbrace 
$.
\end{definition}

\begin{definition}[$\omega$-list Length]
Given an $\omega$-lists $L$, $|L|:\omega$ is defined as 
$\left\lbrace 
|L| =  1+ |L.2| \ ;\ \right. $ \\ $\left. | \Olist{\ }{\ } |  =0
\right\rbrace $.
\end{definition}
A carriage return list is a special type of $\omega$-list.
\begin{definition}[Carriage return list]
A carriage return list $C = \CRlist{F}{m}{B}$ is an $\omega$-list of the following form $F\circledcirc \Olist{m}{B}$. Also, we define $ \CRmid{C} = m$. The canonical carriage return list  \\ $\CRlist{\ }{n}{(n-1),\cdots , 0}$, will be referred to as $\mathit{CR}_{n}$.
\end{definition}

\begin{definition}[Carriage Return List operators]
Given a carriage return list $C = \CRlist{F}{m}{B}$ we define the {\em shift} $\Rsh$ and {\em carriage return} $\Lsh$ operators as follow:
$$\begin{array}{lll}
\CRlist{F}{m}{B}\Rsh =  \CRlist{F\circledcirc \Olist{m}{\ }}{B.1}{B.2} &  \CRlist{F}{m}{\ }\Rsh  = \CRlist{F}{m}{\ } & \CRlist{\  }{\ }{ \  }\Rsh = \CRlist{\ }{\ }{\ }
\\ & & \\
\CRlist{F}{m}{B}\Lsh =  \CRlist{\  }{F.1}{F.2\circledcirc B} &  \CRlist{\ }{m}{B}\Lsh = \CRlist{\   }{B.1}{B.2} & \CRlist{\  }{\ }{ \  }\Lsh = \CRlist{\ }{\ }{\ }
\end{array}$$
Given a carriage return list $C$, the set of all derivable carriage return lists from $C$ using the above operators is $D_{C}$. 
\end{definition}  
\begin{example}
Let us consider the carriage return list $C = \CRlist{\ }{4}{3,2,1,0}$. The list $C\Rsh \Rsh \Rsh = \CRlist{4,3,2}{1}{0} $. Apply a carriage return to $C\Rsh \Rsh \Rsh$ we get $C\Rsh \Rsh \Rsh \Lsh = \CRlist{}{4}{3,2,0}$.
\end{example}
 Notice that $\CRlist{\ }{\ }{\ }$ is always derivable from a carriage return list, i.e. $C\Lsh\Lsh\Lsh\Lsh\Lsh$, and the only operator which can be applied to  $C\Rsh\Rsh\Rsh\Rsh\Rsh$ is $\Lsh$. These two special cases will replace the base case in our generalized resolution proof schema. 
\subsection{Generalized Resolution Proof Schema}
Using carriage return list we define the following resolution proof schema. 

\begin{definition}[Generalized resolution proof schema]
A generalized resolution proof schema $\mathcal{R}(n)$ is a structure $( \varrho_1 , \cdots , \varrho_\alpha )$ together with a set of rewrite rules $\mathcal{R} = \mathcal{R}_1 \cup \cdots \cup \mathcal{R}_{\alpha}$ ,
where the $\mathcal{R}_i\ (for \ 1 \leq i \leq \alpha )$ are triples of rewrite rules
\begin{center}
 
\begin{tabular}{ccc }
$\varrho_i (\CRlist{\ }{\ } {\ }, \overline{w},\overline{ u},  \overline{X} ) \rightarrow \eta_i$ & $\varrho_i (\CRlist{F}{m} {\ }, \overline{w},\overline{ u},  \overline{X} ) \rightarrow \eta_i'$ & $\varrho_i (\CRlist{F}{m} {B},\overline{w},\overline{ u},  \overline{X} ) \rightarrow \eta''_i $
\end{tabular}

\end{center}

where, $\overline{w},\overline{ u},$ and $ \overline{X}$ are vectors of $\omega$, schematic, and clause variables respectively, $\CRlist{\ }{\ } {\ },\CRlist{F}{m} {\ },\CRlist{F}{m} {B}\in D_{\mathit{CR}_{n}}$,    $\eta_i$ is a resolution term over terms of the form $\varrho_j(a_j, \overline{m},\overline{ t},  \overline{C})$ for $i<j\leq \alpha$ , $\eta'_i$ is a resolution term over terms of the form $\varrho_j(a_j , \overline{m},\overline{ t},  \overline{C})$ and $\varrho_l(\CRlist{F}{m} {\ }\Lsh, \overline{m},\overline{ t},  \overline{C})$ for $1\leq l\leq i < j \leq \alpha$, and  $\eta''_i$ is a resolution term over terms of the form $\varrho_j(a_j , \overline{m},\overline{ t},  \overline{C})$, $\varrho_l(\CRlist{F}{m}{B}\Lsh, \overline{m},\overline{ t},  \overline{C})$, and $\varrho_i(\CRlist{F}{m}{B}\Rsh, \overline{m},\overline{ t},  \overline{C})$, for $1\leq l\leq i < j \leq \alpha$; by $a_j$, we denote an arbitrary carriage return list.
\end{definition}
Notice that the previous definition of Sec~\ref{ResProSch} can be obtained from the generalized definition by ignoring the carriage return operator and ignoring the leftmost component of the rewrite system. The semantic meaning is generalized as follows, let $R(n) = ( \varrho_{1}, \cdots , \varrho_{\alpha} )$ be a resolution proof schema, $\theta$ be a clause substitution, $\nu$ an $\omega$-variable substitution, $\vartheta$ be a substitution schema, and $\gamma \in \mathbb{N}$, then  $R(\gamma)\downarrow$ denotes a resolution
term which has a normal form of $\varrho_1 (\mathit{CR}_{n},\overline{w}, \overline{u} , \overline{X} )\theta\nu\vartheta[n/\gamma ]$ w.r.t. $R$ extended by rewrite rules for defined function and predicate symbols. Essentially just exchanging the numeral in the normal form of Sec.~\ref{ResProSch} with the canonical carriage return list.
\subsection{Resolution proof schema for NiA-schema and Herbrand System}\label{genref}
We use the following abbreviations to simplify the formalization of the refutation of the NiA-schema:
\begin{center}
\begin{tabular}{cc}
$C1(t) = \vdash t\leq t $ & $C2(t,w) = \max(s(t),t) \leq w \vdash s(t)\leq w $ \\
$C3(t,w) = \max(s(t),t) \leq w \vdash t \leq w $ & $C4(t,w,n) = f(t)= n , f(w)= n, s(t)\leq w$ \\ $C5(t,n) = f(t)= 0 ,\cdots  f(t)= n $ & $m(k) = \left\lbrace\begin{array}{cc}
max(s(m(k-1)),m(k-1)) &k>0\\
0 & k=0
\end{array}  \right. $
\end{tabular}
\end{center}
Note that the refutation given in this section is not precisely the same as the refutation of Sec.~\ref{sec:refuteset}, we reordered parts of it. The major difference is we do not construct the clause set of  Lem.~\ref{lem:lbase} \&~\ref{lem:lbase2}. This was only done to show the relationship between this work and \cite{SimonThesis}. Though, the refutation has the same global structure as in Sec.~\ref{sec:refuteset}. The resolution proof schema itself has three components, of which $\rho_{1}$ simulates Thm.~\ref{thm:refofC} and Lem.~\ref{lem:l13},  $\rho_{2}$ simulates the local unification  used in Sec.~\ref{sec:refuteset} by generating all the terms at once, and $\rho_{3}$ and $\rho_{4}$  simulate Lem.~\ref{lem:first}. An example refutation can be found in Appendix \ref{exampleref}.
\begin{center}
\begin{tabular}{l}
$\rho_{1}(C,w,t,p,q,y,X,Y) \rightarrow r(\rho_{2}(\mathit{CR}_{|C|}\Rsh,\CRmid{C},1,|C\Lsh|,|C|,y,X\circ f(y_{|C|})= \CRmid{C}\vdash,Y);$ \\
\ \ \ \ \ $ \rho_{1}(C\Rsh,0,0,0,0,y,X,Y\circ \vdash f(y_{|C|})= \CRmid{C}) $\\
$\rho_{1}((C=\CRlist{F}{m}{\ }),w,t,p,q,y,X,Y) \rightarrow$ \\ 
\ \ \ \ \ $ r(\rho_{2}(\mathit{CR}_{|C|}\Rsh,\CRmid{C},1,|C\Lsh|,|C|,y,X\circ f(y_{|C|})= \CRmid{C}\vdash,Y);C5(y_{|C|})\circ Y;  f(y_{|C|})= \CRmid{C} ) $\\
$\rho_{1}(\CRlist{\ }{\ }{\ },w,t,p,q,y,X,Y) \rightarrow C5(y_{0})\circ Y$\\
\hline
$\rho_{2}(C,w,t,p,q,y,X,Y) \rightarrow$ \\ 
\ \ \ \ \ $ r(\rho_{3}(\mathit{CR}_{t},w,t,p,q,y,X\circ f(y_{p})= w\vdash,Y); \rho_{2}(C\Rsh,w,t+1,p-1,q,y,X,Y\circ \vdash f(y_{|C|})= w) $\\
$\rho_{2}((C=\CRlist{F}{m}{\ }),w,t,p,q,y,X,Y) \rightarrow $ \\ 
\ \ \ \ \  $r(\rho_{3}(\mathit{CR}_{t},w,t,p,q,y,X\circ f(y_{p})= w\vdash,Y);\rho_{1}(C\Lsh,0,0,0,0,y,X',Y\circ \vdash f(y_{|C|})= w) $\\
$\rho_{2}(\CRlist{\ }{\ }{\ },w,t,p,q,y,X,Y) \rightarrow \vdash $\\
\hline
$\rho_{3}(C,w,t,p,q,y,X,Y)\rightarrow r(C4(y_{p},y_{q},w)\circ X;r(\rho_{4}(C\Rsh ,w,t,p+1,q,y,X',Y);$\\ 
\ \ \ \ \  $ C2(y_{p},y_{q}); max(s(y_{p}),y_{p})\leq y_{q}); s(y_{p})\leq y_{q}) $\\
$\rho_{3}((C=\CRlist{F}{m}{\ }),w,t,p,q,y,X,Y)\rightarrow r(C4(y_{p},y_{q},w)\circ X;r(\rho_{4}(C ,w,t,p+1,q,y,X',Y);$\\ 
\ \ \ \ \  $ C2(y_{p},y_{q}); max(s(y_{p}),y_{p})\leq y_{q}); s(y_{p})\leq y_{q}) $\\
$\rho_{3}(\CRlist{\ }{\ }{\ },w,t,p,q,y,X,Y)\rightarrow \vdash$\\
\hline
$\rho_{4}(C,w,t,p,q,y,X,Y)\rightarrow r(\rho_{4}(C\Rsh,w,t,p+1,q,y,X,Y); C3(y_{p},y_{q});max(s(y_{p}),y_{p})\leq y_{q}) $\\
$\rho_{4}((C=\CRlist{F}{m}{\ }),w,t,p,q,y,X,Y)\rightarrow  C1(y_{q}) $\\
$\rho_{4}((C=\CRlist{\ }{\ }{\ }),w,t,p,q,y,X,Y)\rightarrow \vdash $
\end{tabular}
\end{center}
The substitution schema is $\vartheta =\left\lbrace y(k)\leftarrow \lambda k.m(k) \right\rbrace$, the clause substitution is $\theta = \left\lbrace Y\leftarrow \vdash,X\leftarrow \vdash  \right\rbrace$, and  $\omega$-variable substitution $\nu =\left\lbrace w\leftarrow 0 , t\leftarrow 0 , p\leftarrow 0, q\leftarrow 0 \right\rbrace$. Thus  it has a normal of  $\varrho_1 (\mathit{CR}_{n},w,t,p,q,y,X, Y )\theta\nu\vartheta[n/\gamma ]$. The skolemized prenex sequent schema needed for the extraction of the Herbrand system for the NiA-schema is 
 $$\varphi(n) \vdash \psi(n) \equiv \forall x \bigvee_{i=0}^n f(x)= i \vdash \exists x \exists y ( x<y \wedge f(x)=f(y)).$$
Our rewrite system  for the Herbrand system is $\mathcal{R} = \lbrace w^{\Phi}_1(\mathit{CR}_{\gamma}) , w^\Psi(\mathit{CR}_{\gamma})_1 \rbrace$, for $\gamma\in\mathbb{N}$, which are defined as follows:

\begin{tabular}{l}
\\
$ w^{\Phi}_1(C) = \left\lbrace \begin{array}{cc} 
\left[ m(|C|)\middle|\ \right]  & C=\CRlist{\ }{\ }{\ }\\
\left[ m(|C|)\middle|\ w^{\Phi}_1(C\Lsh)\right] & C=\CRlist{F }{m }{\ }\\
\left[ m(|C|)\middle| \ w^{\Phi}_1(C\Lsh)\right] & \mathrm{otherwise}
\end{array} \right. $ \\\\ $ w^{\Psi}_1(C) = \left\lbrace \begin{array}{cc} 
\left[\ \middle| \ \right]  & C=\CRlist{\ }{\ }{\ }\\
\left[ (m(\CRmid{C}),m(|C|))\middle|\ w^{\Psi}_1(C\Lsh) \right]   & C=\CRlist{F }{m }{\ }\\
\left[ (m(\CRmid{C}),m(|C|))\middle|\ w^{\Psi}_1(C\Rsh)_1\right] \circ w^{\Psi}_1(C\Lsh) & \mathrm{otherwise}
\end{array} \right. $
\end{tabular}

By $\circ$ we mean list concatenation. Note that list created by $w^{\Psi}_1(C)$ have repetition, but this is not an issue. This results in the Herbrand sequent 
$$S(n)\equiv \bigwedge_{i=0}^{n+1}\bigvee_{j=0}^n  f(m(i))= j \vdash \bigvee_{i=0}^{n}\bigvee_{j=0}^i (m(j)<m(i+1) \wedge f(m(j))= f(m(i+1))).$$

The proof of the Herbrand sequent requires the equality axiom $f(\alpha)=i,f(\beta)=i\vdash f(\alpha)=f(\beta)$ and axioms $\vdash m(i) < m(j)$, when $i<j$. 
\section{Conclusion} 

\label{sec:Conclusion}
In this work, we attempted to formalize the NiA-schema within the framework of~\cite{CERESS2}. However, the refutation of Sec. \ref{sec:refuteset} could not be formalized as resolution proof schema using the definition of Sec. \ref{ResProSch}. This was due to the refutation using every ordering of the $\omega$ sort terms. We generalized the definition of resolution proof schema, not only so the refutation of the NiA-schema can be formalized, but also to allow the formalization of the refutation structure of \cite{SimonThesis}. It happens to be the case that the clause set of the NiA-schema can nearly be transformed into a clause set resulting in the said refutation structure, i.e. the derived clause set of Lem.~\ref{lem:lbase} \&~\ref{lem:lbase2}. Also, we show that extraction of a Herbrand system from our generalized resolution proof schema is still possible, though with recursive list construction over carriage return list. In future work, we will investigate forms of schematic characteristic clause sets whose refutation is captured by our generalized definition. Also, up for investigation is finding forms of $\mathbf{LKS}$-calculus whose proof schemata result in clause sets which always have a refutation that can be formalized in our generalized definition. 

\bibliographystyle{plain}
\bibliography{references}
\appendix
\section{Missing proofs Sec.~\ref{sec:refuteset}}
\label{Appendix}

\subsection{Proof of Lem. \ref{cor:first}}
\begin{prooftree}
\AxiomC{$\begin{array}{c}(Lem. \ref{lem:first}) \\ \vdash P \end{array}$}
\AxiomC{$\begin{array}{c}  (C2) \\ max(\beta ,\delta ) \leq \gamma  \vdash  \beta \leq  \gamma  \end{array}$}
\RightLabel{$res(\sigma,P)$} 
\BinaryInfC{$\begin{array}{c}\vdash  s(x_{k+1}) \leq m(k,\overline{x},max(s(x_{k+1}),t)) \end{array}$}
\end{prooftree}
$$P = max(s(x_{k+1}),t) \leq  m(k,\overline{x},max(s(x_{k+1}),t))$$
$$\sigma =\left\lbrace \beta  \leftarrow s(x_{k+1}),  \gamma \leftarrow m(k,\overline{x},max(s(x_{k+1}),t)) ,  \delta \leftarrow t \right\rbrace $$
\\
$\square$

\subsection{Proof of Cor. \ref{cor:second}}
\begin{prooftree}
\AxiomC{$\begin{array}{c}(Cor. \ref{cor:first}) \\ \vdash P \end{array}$}
\AxiomC{$\begin{array}{c}  (C4_i) \\ f(\alpha)= i , f(\beta) = i , s(\alpha)\leq \beta  \vdash   \end{array}$}
\RightLabel{$res(\sigma,P)$} 
\BinaryInfC{$\begin{array}{c}   f(x_{k+1})= i, f(m(k,\overline{x}_{k},max(s(x_{k+1}),t))) = i \vdash \end{array}$}
\end{prooftree}
$$P = s(x_{k+1}) \leq  m(k,\overline{x}_{k},max(s(x_{k+1}),t))$$
$$\sigma =\left\lbrace \alpha  \leftarrow x_{k+1},  \beta \leftarrow m(k,\overline{x}_{k},max(s(x_{k+1}),t)) \right\rbrace $$
\\
$\square$

\subsection{Proof of Cor. \ref{cor:third}}
\begin{prooftree}
\AxiomC{$\begin{array}{c}(Lem. \ref{lem:first}) \\ \vdash P \end{array}$}
\AxiomC{$\begin{array}{c}  (C4_i) \\ f(\alpha)= i , f(\beta) = i , s(\alpha)\leq \beta  \vdash   \end{array}$}
\RightLabel{$res(\sigma,P)$} 
\BinaryInfC{$\begin{array}{c}   f(x_{k+1})= i, f(m(k,\overline{x},s(x_{k+1}))) = i \vdash \end{array}$}
\end{prooftree}
$$P = s(x_{k+1}) \leq  m(k,\overline{x}_{k},s(x_{k+1}))$$
$$\sigma =\left\lbrace \alpha  \leftarrow x_{k+1},  \beta \leftarrow m(k,\overline{x}_{k},s(x_{k+1}))) \right\rbrace $$
\\
$\square$
\subsection{Proof of Cor. \ref{lem:lbase}}
\label{sec:lbaseproof}
We prove this lemma by induction on $k$ and a case distinction on $n$.  When $n=0$ there are two possible values for $k$, $k=0$ or $k=-1$. When $k=-1$ the clause is an instance of (C5). When $k=0$ we have the following derivation:
\begin{prooftree}
\AxiomC{$\begin{array}{c}(C5) \\ c_{b}(-1,1,y) \end{array}$}
\AxiomC{$\begin{array}{c}  (Cor. \ref{cor:second}[i\leftarrow b(0), k\leftarrow 0 ]) \\ f(x_{1})= b(0), f(max(s(x_{1}),z)) = b(0) \vdash  \end{array}$}
\RightLabel{$res(\sigma,P)$} 
\BinaryInfC{$\begin{array}{c}   c_{b}(0,1,z)  \end{array}$}
\end{prooftree}
$$P = f(max(s(x_{1}),z)) = b(0)$$
$$\sigma =\left\lbrace y  \leftarrow max(s(x_{1}),z)\right\rbrace $$

By $(Cor. \ref{cor:second}[i\leftarrow b(0), k\leftarrow 0 ])$ we mean take the clause that is proven derivable by Cor. \ref{cor:second} and instantiate the free parameters of Cor. \ref{cor:second}, i.e. $i$ and $k$, with the given terms, i.e. $b(0)$ and $0$.   Remember that $b(0)$ can be either $0$ or $1$. We will use this syntax through out this section. When $n>0$ and  $k=-1$ we again trivially have (C5). When $n>0$ and  $k=0$, the following derivation suffices:

\begin{prooftree}
\AxiomC{$\begin{array}{c}(C5) \\ c_{b}(-1,n,y) \end{array}$}
\AxiomC{$\begin{array}{c}  (Cor. \ref{cor:second}[i\leftarrow b(0), k\leftarrow 0 ]) \\ f(x_{1})= b(0), f(max(s(x_{1}),z)) = b(0) \vdash  \end{array}$}
\RightLabel{$res(\sigma,P)$} 
\BinaryInfC{$\begin{array}{c}   c_{b}(0,n,z)  \end{array}$}
\end{prooftree}
$$P = f(max(s(x_{1}),z)) = b(0)$$
$$\sigma =\left\lbrace y  \leftarrow max(s(x_{1}),z)\right\rbrace $$

The main difference between the case for $n=1$ and $n>1$ is the possible instantiations of the bijection at $0$. In the case of $n>1$, $b(0) = 0 \ \vee  \cdots \vee \ b(0) = n$.  Now we assume that for all $w< k+1 <n$ and $n>0$ the theorem holds, we proceed to show that the  theorem holds for $k+1$.  The following derivation will suffice:

\begin{prooftree}
\AxiomC{$\begin{array}{c}(IH) \\ c_{b}(k,n,y) \end{array}$}
\AxiomC{$\begin{array}{c}  (Cor. \ref{cor:second}[i\leftarrow b(k+1)]) \\  f(x_{k+1})= b(k+1), P \vdash  \end{array}$}
\RightLabel{$res(\sigma,P)$} 
\BinaryInfC{$\begin{array}{c}   c_{b}(k+1,n,z)  \end{array}$}
\end{prooftree}
$$P = f(m(k,\overline{x}_{k},max(s(x_{k+1}),t))) = b(k+1)$$
$$\sigma =\left\lbrace y  \leftarrow max(s(x_{k+1}),z)\right\rbrace $$
\\
$\square$

\subsection{Proof of Lem. \ref{lem:lbase2}}
\label{sec:lbase2proof}
We prove this lemma by induction on $k$ and a case distinction on $n$.  When $n=0$ it must be the case that $k=0$. When $k=0$ we have the following derivation:
\begin{prooftree}
\AxiomC{$\begin{array}{c}(C5) \\ c_{b}(-1,0,y) \end{array}$}
\AxiomC{$\begin{array}{c}  (Cor. \ref{cor:third}[i\leftarrow 0, k\leftarrow 0 ]) \\ f(x_{1})= 0, f(s(x_{1})) = 0 \vdash  \end{array}$}
\RightLabel{$res(\sigma,P)$} 
\BinaryInfC{$\begin{array}{c}   c'_{b}(0,0)  \end{array}$}
\end{prooftree}
$$P = f(s(x_{1})) = 0$$
$$\sigma =\left\lbrace y  \leftarrow s(x_{1})\right\rbrace $$

Remember that $b(0)$ can only be mapped to  $0$. When $n>0$ and  $k=0$, the following derivation suffices:

\begin{prooftree}
\AxiomC{$\begin{array}{c}(C5) \\ c_{b}(-1,n,y) \end{array}$}
\AxiomC{$\begin{array}{c}  (Cor. \ref{cor:third}[i\leftarrow b(0), k\leftarrow 0 ]) \\ f(x_{1})= b(0), f(s(x_{1})) = b(0) \vdash  \end{array}$}
\RightLabel{$res(\sigma,P)$} 
\BinaryInfC{$\begin{array}{c}   c'_{b}(0,n)  \end{array}$}
\end{prooftree}
$$P = f(s(x_{1})) = b(0)$$
$$\sigma =\left\lbrace y  \leftarrow s(x_{1})\right\rbrace $$

The main difference between the case for $n=0$ and $n>0$ is the possible instantiations of the bijection at $0$. In the case of $n>0$, $b(0) = 0 \ \vee  \cdots \vee \ b(0) = n$.  Now we assume that for all $w\leq k$ the theorem holds, we proceed to show that the  theorem holds for $k+1$.  The following derivation will suffice:

\begin{prooftree}
\AxiomC{$\begin{array}{c}(IH) \\ c_{b}(k,n,y) \end{array}$}
\AxiomC{$\begin{array}{c}  (Cor. \ref{cor:second}[i\leftarrow b(k+1)]) \\  f(x_{k+1})= b(k+1), P \vdash  \end{array}$}
\RightLabel{$res(\sigma,P)$} 
\BinaryInfC{$\begin{array}{c}   c_{b}(k+1,n,z)  \end{array}$}
\end{prooftree}
$$P = f(m(k,\overline{x}_{k},max(s(x_{k+1}),t))) = b(k+1)$$
$$\sigma =\left\lbrace y  \leftarrow max(s(x_{k+1}),z)\right\rbrace $$
\\
$\square$

\section{Fragment of Refutation From Sec.~\ref{genref} for \textbf{n=2}}\label{exampleref}
The refutation is quite big thus we do not construct the entire refutation, but only a fragment which contains every part of the recursive structure. The substitutions are as follows: 
$$ y_{3} \leftarrow max(s(max(s(max(s(0),0)),max(s(0),0))),max(s(max(s(0),0)),max(s(0),0)))$$
$$ y_{2} \leftarrow max(s(max(s(0),0)),max(s(0),0))$$
$$ y_{1} \leftarrow max(s(0),0)$$
$$ y_{0} \leftarrow 0$$
Connections between the subproofs are marked with numbers. 

\begin{sidewaysfigure}
\begin{prooftree}
\AxiomC{$\vdash y_{1} \leq y_{1}$}
\AxiomC{$\begin{array}{c} max(s(y_{0}), y_{0}) \leq y_{1} \vdash\\  s(y_{0}) \leq y_{1} \end{array}$}

\BinaryInfC{$\vdash s(y_{0}) \leq y_{1} )$}
\AxiomC{$\begin{array}{c} f(y_{2}) = 0, \\ f(y_{1}) = 0 , \\ s(y_{0})\leq y_{1} \vdash \end{array} $}
\BinaryInfC{$\begin{array}{c}f(y_{1}) = 0,  f(y_{0}) = 0 \vdash \\ (12)\end{array}$}
\end{prooftree}
\begin{prooftree}
\AxiomC{$\begin{array}{c}(12)\\ f(y_{1}) = 0,  f(y_{0}) = 0 \vdash \end{array} $}

\AxiomC{$\begin{array}{c} \vdash f(y_{0}) = 1 \\f(y_{0}) = 2, f(y_{0}) = 0\end{array} $}

\BinaryInfC{$\begin{array}{c} f(y_{1}) = 0 \vdash  f(y_{0}) = 2,\\ f(y_{0}) = 1\end{array}$}
\AxiomC{$\begin{array}{c} \vdash  f(y_{1}) = 1, \\ f(y_{1}) = 2, f(y_{1})= 0\end{array}$}
\BinaryInfC{$\begin{array}{c}  \vdash  f(y_{1}) = 1,f(y_{0}) = 1 \\f(y_{1}) = 2, f(y_{0}) = 2 \\ (15) \end{array}$}
\end{prooftree}
\end{sidewaysfigure}

\begin{sidewaysfigure}

\begin{prooftree}
\AxiomC{$\vdash y_{3} \leq y_{3}$}

\AxiomC{$\begin{array}{c} max(s(y_{2}), y_{2}) \leq y_{3}  \vdash y_{2} \leq y_{3} \end{array}$}

\BinaryInfC{$\begin{array}{c}  \vdash y_{2} \leq y_{3} \end{array}$}

\AxiomC{$\begin{array}{c} max(s(y_{1}), y_{1}) \leq y_{3}  \vdash y_{1} \leq y_{3} \end{array}$}

\BinaryInfC{$\begin{array}{c}  \vdash y_{1} \leq y_{3} \end{array}$}

\AxiomC{$\begin{array}{c} max(s(y_{0}), y_{0}) \leq y_{3} \vdash\\  s(y_{0}) \leq y_{3} \end{array}$}

\BinaryInfC{$\vdash s(y_{0}) \leq y_{3} $}
\AxiomC{$\begin{array}{c} f(y_{3}) = 2, \\ f(y_{0}) = 2 , \\ s(y_{0})\leq y_{3} \vdash \end{array} $}
\BinaryInfC{$\begin{array}{c}f(y_{3}) = 2,  f(y_{0}) = 2 \vdash \\ (2)\end{array}$}
\end{prooftree}

\end{sidewaysfigure}

\begin{sidewaysfigure}
\begin{prooftree}
\AxiomC{$\vdash y_{2} \leq y_{2}$}

\AxiomC{$\begin{array}{c} max(s(y_{1}), y_{1}) \leq y_{2}  \vdash y_{1} \leq y_{2} \end{array}$}

\BinaryInfC{$\begin{array}{c}  \vdash y_{1} \leq y_{2} \end{array}$}

\AxiomC{$\begin{array}{c} max(s(y_{0}), y_{0}) \leq y_{2} \vdash\\  s(y_{0}) \leq y_{2} \end{array}$}

\BinaryInfC{$\vdash s(y_{0}) \leq y_{2} $}
\AxiomC{$\begin{array}{c} f(y_{2}) = 1, \\ f(y_{0}) = 1 , \\ s(y_{0})\leq y_{2} \vdash \end{array} $}
\BinaryInfC{$\begin{array}{c}f(y_{2}) = 1,  f(y_{0}) = 1 \vdash \\ (13)\end{array}$}
\end{prooftree}

\begin{prooftree}
\AxiomC{$\vdash y_{2} \leq y_{2}$}
\AxiomC{$\begin{array}{c} max(s(y_{1}), y_{1}) \leq y_{2} \vdash\\  s(y_{1}) \leq y_{2} \end{array}$}

\BinaryInfC{$\vdash s(y_{1}) \leq y_{2} )$}
\AxiomC{$\begin{array}{c} f(y_{2}) = 1, \\ f(y_{1}) = 1 , \\ s(y_{1})\leq y_{2} \vdash \end{array} $}
\BinaryInfC{$\begin{array}{c}f(y_{2}) = 1,  f(y_{1}) = 1 \vdash \\ (12)\end{array}$}
\end{prooftree}
\begin{prooftree}
\AxiomC{$\begin{array}{c}(12)\\ f(y_{2}) = 1,  f(y_{1}) = 1 \vdash \end{array} $}

\AxiomC{$\begin{array}{c} (13) \\ f(y_{2}) = 1, f(y_{0}) = 1 \vdash \end{array} $}

\AxiomC{$\begin{array}{c}(15) \\  \vdash  f(y_{1}) = 1,f(y_{0}) = 1 \\f(y_{1}) = 2, f(y_{0}) = 2\end{array} $}

\BinaryInfC{$\begin{array}{c} f(y_{2}) = 1 \vdash f(y_{1}) = 1\\ f(y_{1}) = 2, f(y_{0}) = 2 \end{array} $}
\BinaryInfC{$\begin{array}{c} f(y_{2}) = 1\vdash  f(y_{1}) = 2,\\ f(y_{0}) = 2\end{array}$}
\AxiomC{$\begin{array}{c}\vdots \\ \vdash f(y_{2}) = 2, f(y_{2}) = 1,  \\ f(y_{1}) = 2, f(y_{0}) = 2\end{array}$}
\BinaryInfC{$\begin{array}{c}  \vdash f(y_{2}) = 2, \\ f(y_{1}) = 2, f(y_{0}) = 2 \\ (11) \end{array}$}
\end{prooftree}
\end{sidewaysfigure}

\begin{sidewaysfigure}

\begin{prooftree}
\AxiomC{$\vdash y_{3} \leq y_{3}$}

\AxiomC{$\begin{array}{c} max(s(y_{2}), y_{2}) \leq y_{3}  \vdash y_{2} \leq y_{3} \end{array}$}

\BinaryInfC{$\begin{array}{c}  \vdash y_{2} \leq y_{3} \end{array}$}

\AxiomC{$\begin{array}{c} max(s(y_{1}), y_{1}) \leq y_{3}  \vdash y_{1} \leq y_{3} \end{array}$}

\BinaryInfC{$\begin{array}{c}  \vdash y_{1} \leq y_{3} \end{array}$}

\AxiomC{$\begin{array}{c} max(s(y_{0}), y_{0}) \leq y_{3} \vdash\\  s(y_{0}) \leq y_{3} \end{array}$}

\BinaryInfC{$\vdash s(y_{0}) \leq y_{3} $}
\AxiomC{$\begin{array}{c} f(y_{3}) = 0, \\ f(y_{0}) = 0 , \\ s(y_{0})\leq y_{3} \vdash \end{array} $}
\BinaryInfC{$\begin{array}{c}f(y_{3}) = 0,  f(y_{0}) = 0 \vdash \\ (10)\end{array}$}
\end{prooftree}

\end{sidewaysfigure}

\begin{sidewaysfigure}
\begin{prooftree}
\AxiomC{$\vdash y_{3} \leq y_{3}$}

\AxiomC{$\begin{array}{c} max(s(y_{2}), y_{2}) \leq y_{3}  \vdash y_{2} \leq y_{3} \end{array}$}

\BinaryInfC{$\begin{array}{c}  \vdash y_{2} \leq y_{3} \end{array}$}

\AxiomC{$\begin{array}{c} max(s(y_{1}), y_{1}) \leq y_{3} \vdash\\  s(y_{1}) \leq y_{3} \end{array}$}

\BinaryInfC{$\vdash s(y_{1}) \leq y_{3} $}
\AxiomC{$\begin{array}{c} f(y_{3}) = 0, \\ f(y_{1}) = 0 , \\ s(y_{1})\leq y_{3} \vdash \end{array} $}
\BinaryInfC{$\begin{array}{c}f(y_{3}) = 0,  f(y_{1}) = 0 \vdash \\ (9)\end{array}$}
\end{prooftree}

\begin{prooftree}
\AxiomC{$\vdash y_{3} \leq y_{3}$}
\AxiomC{$\begin{array}{c} max(s(y_{2}), y_{2}) \leq y_{3} \vdash\\  s(y_{2}) \leq y_{3} \end{array}$}

\BinaryInfC{$\vdash s(y_{2}) \leq y_{3} )$}
\AxiomC{$\begin{array}{c} f(y_{3}) = 0, \\ f(y_{2}) = 0 , \\ s(y_{2})\leq y_{3} \vdash \end{array} $}
\BinaryInfC{$\begin{array}{c}f(y_{3}) = 0,  f(y_{2}) = 0 \vdash \\ (8)\end{array}$}
\end{prooftree}
\begin{prooftree}
\AxiomC{$\begin{array}{c}(8)\\ f(y_{3}) = 0,  f(y_{2}) = 0 \vdash \end{array} $}

\AxiomC{$\begin{array}{c} (9) \\ f(y_{3}) = 0, f(y_{1}) = 0 \vdash \end{array} $}

\AxiomC{$\begin{array}{c}(10)\\ f(y_{3}) = 0, f(y_{0}) = 0 \vdash \end{array} $}
\AxiomC{$\begin{array}{c}\vdots \\  \vdash f(y_{2}) = 0, \\ f(y_{1}) = 0,\\ f(y_{0}) = 0 \end{array} $}
\BinaryInfC{$\begin{array}{c} f(y_{3}) = 0 \vdash f(y_{2}) = 0, f(y_{1}) = 0\end{array} $}

\BinaryInfC{$\begin{array}{c} f(y_{3}) = 0 \vdash f(y_{2}) = 0\end{array} $}
\BinaryInfC{$f(y_{3}) = 1\vdash $}
\AxiomC{$\begin{array}{c}\vdash f(y_{3}) = 2,f(y_{3}) = 1,\\ f(y_{3}) = 0\end{array}$}
\BinaryInfC{$\begin{array}{c} \vdash f(y_{3}) = 2 ,f(y_{3}) = 1 \\ (7)\end{array}$}
\end{prooftree}
\end{sidewaysfigure}

\begin{sidewaysfigure}

\begin{prooftree}
\AxiomC{$\vdash y_{3} \leq y_{3}$}

\AxiomC{$\begin{array}{c} max(s(y_{2}), y_{2}) \leq y_{3}  \vdash y_{2} \leq y_{3} \end{array}$}

\BinaryInfC{$\begin{array}{c}  \vdash y_{2} \leq y_{3} \end{array}$}

\AxiomC{$\begin{array}{c} max(s(y_{1}), y_{1}) \leq y_{3}  \vdash y_{1} \leq y_{3} \end{array}$}

\BinaryInfC{$\begin{array}{c}  \vdash y_{1} \leq y_{3} \end{array}$}

\AxiomC{$\begin{array}{c} max(s(y_{0}), y_{0}) \leq y_{3} \vdash\\  s(y_{0}) \leq y_{3} \end{array}$}

\BinaryInfC{$\vdash s(y_{0}) \leq y_{3} $}
\AxiomC{$\begin{array}{c} f(y_{3}) = 0 , \\ f(y_{0}) = 0 , \\ s(y_{0})\leq y_{3} \vdash \end{array} $}
\BinaryInfC{$\begin{array}{c}f(y_{3}) = 0 ,  f(y_{0}) = 0 \vdash \\ (6)\end{array}$}
\end{prooftree}

\end{sidewaysfigure}

\begin{sidewaysfigure}
\begin{prooftree}
\AxiomC{$\vdash y_{3} \leq y_{3}$}

\AxiomC{$\begin{array}{c} max(s(y_{2}), y_{2}) \leq y_{3}  \vdash y_{2} \leq y_{3} \end{array}$}

\BinaryInfC{$\begin{array}{c}  \vdash y_{2} \leq y_{3} \end{array}$}

\AxiomC{$\begin{array}{c} max(s(y_{1}), y_{1}) \leq y_{3} \vdash\\  s(y_{1}) \leq y_{3} \end{array}$}

\BinaryInfC{$\vdash s(y_{1}) \leq y_{3} $}
\AxiomC{$\begin{array}{c} f(y_{3}) = 1, \\ f(y_{1}) = 1 , \\ s(y_{1})\leq y_{3} \vdash \end{array} $}
\BinaryInfC{$\begin{array}{c}f(y_{3}) = 1,  f(y_{1}) = 1 \vdash \\ (5)\end{array}$}
\end{prooftree}

\begin{prooftree}
\AxiomC{$\vdash y_{3} \leq y_{3}$}
\AxiomC{$\begin{array}{c} max(s(y_{2}), y_{2}) \leq y_{3} \vdash\\  s(y_{2}) \leq y_{3} \end{array}$}

\BinaryInfC{$\vdash s(y_{2}) \leq y_{3} )$}
\AxiomC{$\begin{array}{c} f(y_{3}) = 1, \\ f(y_{2}) = 1 , \\ s(y_{2})\leq y_{3} \vdash \end{array} $}
\BinaryInfC{$\begin{array}{c}f(y_{3}) = 1,  f(y_{2}) = 1 \vdash \\ (4)\end{array}$}
\end{prooftree}
\begin{prooftree}
\AxiomC{$\begin{array}{c}(4)\\ f(y_{3}) = 1,  f(y_{2}) = 1 \vdash \end{array} $}

\AxiomC{$\begin{array}{c} (5) \\ f(y_{3}) = 1, f(y_{1}) = 1 \vdash \end{array} $}

\AxiomC{$\begin{array}{c}(6)\\ f(y_{3}) = 1, f(y_{0}) = 1 \vdash \end{array} $}
\AxiomC{$\begin{array}{c} \vdots \\  \vdash f(y_{2}) = 1, \\ f(y_{1}) = 1,\\ f(y_{0}) = 1 \end{array} $}
\BinaryInfC{$\begin{array}{c} f(y_{3}) = 1 \vdash f(y_{2}) = 1, f(y_{1}) = 1\end{array} $}

\BinaryInfC{$\begin{array}{c} f(y_{3}) = 1 \vdash f(y_{2}) = 1\end{array} $}
\BinaryInfC{$f(y_{3}) = 1\vdash $}
\AxiomC{$\begin{array}{c}(7) \\ \vdash f(y_{3}) = 2,f(y_{3}) = 1\end{array}$}
\BinaryInfC{$\begin{array}{c} \vdash f(y_{3}) = 2\\ (3)\end{array}$}
\end{prooftree}
\end{sidewaysfigure}

\begin{sidewaysfigure}

\begin{prooftree}
\AxiomC{$\vdash y_{3} \leq y_{3}$}

\AxiomC{$\begin{array}{c} max(s(y_{2}), y_{2}) \leq y_{3}  \vdash y_{2} \leq y_{3} \end{array}$}

\BinaryInfC{$\begin{array}{c}  \vdash y_{2} \leq y_{3} \end{array}$}

\AxiomC{$\begin{array}{c} max(s(y_{1}), y_{1}) \leq y_{3}  \vdash y_{1} \leq y_{3} \end{array}$}

\BinaryInfC{$\begin{array}{c}  \vdash y_{1} \leq y_{3} \end{array}$}

\AxiomC{$\begin{array}{c} max(s(y_{0}), y_{0}) \leq y_{3} \vdash\\  s(y_{1}) \leq y_{3} \end{array}$}

\BinaryInfC{$\vdash s(y_{0}) \leq y_{3} $}
\AxiomC{$\begin{array}{c} f(y_{3}) = 2, \\ f(y_{0}) = 2 , \\ s(y_{0})\leq y_{3} \vdash \end{array} $}
\BinaryInfC{$\begin{array}{c}f(y_{3}) = 2,  f(y_{0}) = 2 \vdash \\ (2)\end{array}$}
\end{prooftree}

\end{sidewaysfigure}

\begin{sidewaysfigure}
\begin{prooftree}
\AxiomC{$\vdash y_{3} \leq y_{3}$}

\AxiomC{$\begin{array}{c} max(s(y_{2}), y_{2}) \leq y_{3}  \vdash y_{2} \leq y_{3} \end{array}$}

\BinaryInfC{$\begin{array}{c}  \vdash y_{2} \leq y_{3} \end{array}$}

\AxiomC{$\begin{array}{c} max(s(y_{1}), y_{1}) \leq y_{3} \vdash\\  s(y_{1}) \leq y_{3} \end{array}$}

\BinaryInfC{$\vdash s(y_{1}) \leq y_{3} $}
\AxiomC{$\begin{array}{c} f(y_{3}) = 2, \\ f(y_{1}) = 2 , \\ s(y_{1})\leq y_{3} \vdash \end{array} $}
\BinaryInfC{$\begin{array}{c}f(y_{3}) = 2,  f(y_{1}) = 2 \vdash \\ (1)\end{array}$}
\end{prooftree}

\begin{prooftree}
\AxiomC{$\vdash y_{3} \leq y_{3}$}
\AxiomC{$\begin{array}{c} max(s(y_{2}), y_{2}) \leq y_{3} \vdash\\  s(y_{2}) \leq y_{3} \end{array}$}

\BinaryInfC{$\vdash s(y_{2}) \leq y_{3} )$}
\AxiomC{$\begin{array}{c} f(y_{3}) = 2, \\ f(y_{2}) = 2 , \\ s(y_{2})\leq y_{3} \vdash \end{array} $}
\BinaryInfC{$\begin{array}{c}f(y_{3}) = 2,  f(y_{2}) = 2 \vdash \\ (0)\end{array}$}
\end{prooftree}
\begin{prooftree}
\AxiomC{$\begin{array}{c}(0)\\ f(y_{3}) = 2,  f(y_{2}) = 2 \vdash \end{array} $}

\AxiomC{$\begin{array}{c} (1) \\ f(y_{3}) = 2, f(y_{1}) = 2 \vdash \end{array} $}

\AxiomC{$\begin{array}{c}(2)\\ f(y_{3}) = 2, f(y_{0}) = 2 \vdash \end{array} $}
\AxiomC{$\begin{array}{c}(11) \\  \vdash f(y_{2}) = 2, \\ f(y_{1}) = 2,\\ f(y_{0}) = 2 \end{array} $}
\BinaryInfC{$\begin{array}{c} f(y_{3}) = 2 \vdash f(y_{2}) = 2, f(y_{1}) = 2\end{array} $}

\BinaryInfC{$\begin{array}{c} f(y_{3}) = 2 \vdash f(y_{2}) = 2\end{array} $}
\BinaryInfC{$f(y_{3}) = 2\vdash $}
\AxiomC{$\begin{array}{c}(3) \\ \vdash f(y_{3}) = 2\end{array}$}
\BinaryInfC{$\vdash$}
\end{prooftree}
\end{sidewaysfigure}

\end{document}